\documentclass[a4paper,12pt]{article}

\usepackage{tikz}
\usetikzlibrary {calc}
\usetikzlibrary {arrows.meta}
\usetikzlibrary {shapes.geometric}
\usetikzlibrary {graphs,shapes.geometric}
\usetikzlibrary {decorations.pathmorphing}

\usetikzlibrary {graphs}

\usepackage[shortlabels]{enumitem}
\usepackage[]{geometry}
\usepackage[utf8]{inputenc}
\usepackage{fancybox, calc}
\usepackage{authblk}
\usepackage{amssymb}
\usepackage{amsmath}
\usepackage{hyperref}
 \usepackage{systeme}
\usepackage{graphicx}
\usepackage{enumitem}
\usepackage{ucs}
\usepackage{amsfonts}
\usepackage{amssymb}
\usepackage{makeidx}
\usepackage{diagbox}
\usepackage{tabularx}
\usepackage{float}
\usepackage{url}
\usepackage{ marvosym }
\usepackage{ wasysym }
\usepackage{ mathrsfs }
\usepackage[utf8]{inputenc}
\usepackage[english]{babel}
\usepackage[normalem]{ulem}
\usepackage{ucs}
\usepackage{amsfonts}
\usepackage{dcolumn}
\usepackage{makeidx}
\usepackage{bm}
\usepackage{multirow}
\usepackage{graphicx}
\usepackage{hyperref}
\usepackage{tikz}
\usepackage{subfig}
\usepackage{booktabs}
\usepackage{xcolor}
\usepackage{fancyhdr}
\usepackage{amsmath, amsthm, amssymb}
\usepackage{multicol}
\usepackage{yfonts}
\usepackage{multicol}
\usepackage{caption}
\usepackage{cancel}
\usepackage{amssymb,amsfonts,amssymb,bbm}
\usepackage{amsmath}
\usepackage{phaistos,hieroglf,staves,ifsym,marvosym,skull}
\usepackage{tikz}
\usepackage{transparent}
\usepackage{eso-pic}
\usepackage{soul}
\usepackage{cite}
\usepackage{lettrine}

\definecolor{rougeG}{rgb}{.76,0,.12}
\definecolor{vertG}{rgb}{.07,.56,.25}

\numberwithin{equation}{section}
\def\Rset{\mathbb{R}}
\def\pdf{f}
\def\supp{\Omega}
\newcommand{\down}[1]{\mathfrak {D}_{#1}}

\def\Rset{\mathbb{R}}

\def\pdf{f}

\newcommand{\descort}[2][]{\mathfrak{E}_{#1}\ifthenelse{\isempty{#2}}{}{ {\left[ #2 \right]}}}
\newcommand{\escort}[2][]{\varepsilon_{#1}\ifthenelse{\isempty{#2}}{}{ {\left[ #2 \right]}}}

\newtheorem{theorem}{Theorem}[section]

\newtheorem{lemma}{Lemma}[section]
\newtheorem{definition}{Definition}[section]
\newtheorem{proposition}{Proposition}[section]

\numberwithin{equation}{section}
\restylefloat{table}

\title{Entropies, cross-entropies and Rényi divergence: sharp three-term inequalities for probability density functions}
\author[1]{Razvan Gabriel Iagar}
\affil[1]{Departamento de Matemática Aplicada, Ciencia e Ingeniería de los Materiales y Tecnología Electrónica, Universidad Rey Juan Carlos,
		28933 Móstoles (Madrid), Spain}
\author[1,2]{David Puertas-Centeno}

\affil[2]{Data, Complex Networks and Cybersecurity Research Institute, Universidad Rey Juan Carlos, 28028 (Madrid), Spain}

\date{\today}

\begin{document}

\maketitle

\begin{abstract}
A new sharp inequality featuring the differential R\'enyi entropy, the R\'enyi divergence and the R\'enyi cross-entropy of a pair of probability density functions is established. The equality is reached when one of the probability density function is an escort density of the other. This inequality is applied, together with a general framework of a pair of transformations reciprocal to each other, to derive a number of further inequalities involving both classical and new informational functionals. A remarkable fact is that, in all these inequalities, the R\'enyi divergence of two probability density functions is sharply bounded by quotients of informational functionals of cross-type and single type. More precisely, we derive sharp inequalities composed by relative and cross versions of the absolute moments, or of the Fisher information measures (among others), and involving two and three probability density functions.
\end{abstract}


\section{Introduction}

The notions of entropy, divergence and cross-entropy are basic constituent elements in information theory. While the entropy is applied to a single probability density, both the divergence and the cross entropy depend on a couple of probability density functions. A simple and well-known addition relation connects the corresponding functionals in the case of the Shannon entropy, the Shannon cross-entropy and the Kullback-Leibler divergence. In this work we deduce a sharp inequality relating the one-parameter families of Rényi entropies, Rényi divergences and Rényi cross entropies as a direct consequence of Jensen inequality. More precisely, we show that the sum of the Rényi entropy and Rényi divergence is bounded by the Rényi cross entropy when the corresponding three entropic parameters satisfy a precise and simple algebraic relation. Remarkably, the equality is reached in the case that one of the probability densities is the escort density (for the definition see for example~\cite{Abe2003}) of the other. We think that this fact could be of potential theoretical and applied interest, as it plays a key role in the nonextensive formalism of statistical physics~\cite{Bercher2012(divergences),Tsallis2009(book)}.

Besides the previously mentioned informational functionals, some other functionals have played traditionally a fundamental role in the development of information theory: the Fisher information measures and the absolute moments. Introduced in ~\cite{Fisher1925}, the Fisher information of a derivable probability density function has been later extended in different ways to the relative framework, see for example~\cite{Hammad1978,Antolin2009,Martin2013,Antolin2014,Yamano2021}. As a parallel way of improving the theory, biparametric extensions have been established in the literature for the Fisher information, as well as for relative Fisher-like measures \cite{Lutwak2005,Bercher2012}, among other measures in a more general framework~\cite{Toranzo2018,Tempesta2011,Tempesta2019}. Very recently, a new relative functional of Fisher type has been proposed by the authors~\cite{IPT2025}. This new functional has the remarkable property of scaling invariance, and this property has allowed for establishing sharp informational inequalities involving the Kullback-Leibler and Rényi divergences. In an alternative direction, relative versions of the $p$-th moments have also been proposed in ~\cite{IPT2025}. \textit{Cross}-counterparts of some of  these functionals will be introduced in this work. 

More sophisticated functionals than the above quoted ones, depending on more than two densities, have been defined in the literature by using the Jensen-Shannon divergence~\cite{Lin2002}, Jensen-Fisher divergence~\cite{Sanchez-Moreno2012}, or Bergman distances~\cite{Stummer-Vajda2012}. In this direction, we also propose a new functional depending on three probability density functions completing the above mentioned structure of relative functionals. With the aid of this new functional, we establish a sharp inequality  bounding the difference between two Rényi divergences. 

In recent years, certain types of probability-preserving transformations have been successfully employed in transporting informational inequalities. On the one hand, the so-called differential escort transformation has been used to extend the moment-entropy, Stam and Cramér-Rao inequalities by transporting their biparametric counterparts introduced in~\cite{Lutwak2005} to new functionals~\cite{Zozor2017,Puertas2025}. On the other hand, a pair of mutually inverse transformations, called \textit{up} and \textit{down} transformations, have allowed to highlight a mirrored domain of validity for the entropic parameters in the above mentioned inequalities~\cite{IP2025}, but also to extend them to more general classes of informational functionals depending either on the second derivative or on some incomplete weighted integrals of the density~\cite{IP2025(b),IP2025(c)}. All of these transformations, which play a central role in this work, can be seen as particular cases of the generalized Sundman-like transformations recently defined and employed in the analysis of certain classes of differential equations related to some notable non-linear version of the Schrödinger equation~\cite{GPPT25(b)}. In addition, a relative counterpart of the differential-escort transformation has been defined in \cite{IPT2025}. The latter transformation has thus motivated the definition of the previously discussed Fisher-like and moment-like scale-invariant relative measures and prove their corresponding informational inequalities. These transformations play a key role to extend the inequalities involving two densities to inequalities involving three densities.

The present paper has, in our opinion, two strong points that we want to highlight here. As a first result, as already mentioned in the previous paragraphs, we establish a sharp inequality involving the Rényi entropy, the Rényi divergence and the Rényi cross-entropy. The other main result of this work is the establishment of a general framework consisting in introducing a pair of measure-preserving transformations in such a way that the pair of transformed and reciprocally transformed densities preserves the Rényi divergence. We then systematically employ this general framework together with many of the above mentioned transformations to derive new sharp inequalities. As a general byproduct of the present results, we emphasize the establishment of sharp bounds for the Rényi divergence in terms of functionals of different nature (such as moments, entropies, Fisher information measures and beyond) and their cross-like counterparts. The cases of equality are explicitly given in all these new inequalities, proving their sharpness.

We believe that the inequalities given in Section 3 are only a few examples among the ones that can be obtained by employing the general framework. We thus leave to the interested reader to derive more sharp bounds by applying the same framework in connection with other measure-preserving transformations.


\section{An inequality involving the R\'enyi entropy, the R\'enyi divergence and the R\'enyi cross-entropy}

The aim of this section is to state and prove a new inequality involving some already well-established informational functionals. As explained in the Introduction, this inequality is the root of the forthcoming applications given in the rest of the paper.

Throughout this paper we consider the following general framework: $\alpha,\beta,\gamma\in\mathbb R\setminus\{1\}$ are three real numbers satisfying the following relation
\begin{equation}\label{eq:relation}
(\alpha-\beta)(\alpha-\gamma)=(\alpha-1)^2.
\end{equation}
Moreover, $f$, $g$ and $h$ will be probability density functions such that
\begin{equation}\label{cond:support}
	\begin{split}
		&{\rm supp}\,f={\rm supp}\,g={\rm supp}\,h=\overline{\Omega}, \quad \Omega=(c,d)\subseteq\Rset, \\
		&f(x)>0, \ g(x)>0, \ h(x)>0, \quad {\rm for \ any} \ x\in\Omega,
	\end{split}
\end{equation}
where $\Omega$ can be either bounded or unbounded and $\overline{\Omega}$ denotes the closure of the set $\Omega$. We also employ throughout the paper the notation $g(x)\propto f(x)$ to say that the functions $g$ and $f$ are proportional; that is, $g(x)/f(x)$ is a constant.

\subsection{A brief review on informational measures}

Before stating the main inequality, and for the sake of completeness, we recall below the definitions of the three informational functionals participating in it.

\medskip

\noindent \textbf{R\'enyi entropy.} Given $\alpha\neq1$, the differential R\'enyi entropy of $\alpha$-order of a probability density function $\pdf$ is defined as
\begin{equation*}
	R_\alpha[\pdf] = \frac1{1-\alpha} \log\left( \int_\Rset [\pdf(x)]^\alpha \, dx \right).
\end{equation*}
In the limiting case $\alpha=1$ we recover the well-known differential Shannon entropy
\begin{equation*}
	\lim\limits_{\alpha\to 1}R_{\alpha}[\pdf]=S[\pdf]=-\int_\Rset \pdf(x) \, \log\pdf(x) \, dx.
\end{equation*}

\medskip

\noindent \textbf{R\'enyi divergence.} Given $\alpha\neq1$, the differential R\'enyi divergence of $\alpha$-order of a pair of probability density functions $\pdf$ and $g$ satisfying \eqref{cond:support} is defined as
\begin{equation*}
	D_\alpha[\pdf||g] = \frac1{\alpha-1} \log\left( \int_\Rset [\pdf(x)]^{\alpha}[g(x)]^{1-\alpha} \, dx \right).
\end{equation*}
In the limiting case $\alpha=1$ we recover the well-known Kullback-Leibler divergence
\begin{equation*}
	\lim\limits_{\alpha\to 1}D_{\alpha}[\pdf||g]=D[\pdf||g]=\int_\Rset \pdf(x) \, \log\frac{\pdf(x)}{g(x)} \, dx.
\end{equation*}

\medskip

\noindent \textbf{R\'enyi cross-entropy.} Given $\alpha\neq1$, the differential R\'enyi cross-entropy of $\alpha$-order of a probability density function $\pdf$ relative to a probability density function $g$ is defined as
\begin{equation*}
	H_\alpha[\pdf;g] = \frac1{1-\alpha} \log\left( \int_\Rset \pdf(x)[g(x)]^{\alpha-1} \, dx \right).
\end{equation*}
In the limiting case $\alpha=1$ we recover the well-known differential Shannon cross-entropy
\begin{equation*}
	\lim\limits_{\alpha\to 1}H_{\alpha}[\pdf]=H[\pdf;g]=-\int_\Rset \pdf(x) \, \log g(x) \, dx.
\end{equation*}
Note that $H_{\alpha}[f;f]=R_{\alpha}[f]$.

\subsection{The main inequality}

We are now in a position to state and prove the inequality representing the starting point of the applications presented in the rest of the paper.
\begin{theorem}\label{theoremRRR}
	 Let $\alpha,\beta,\gamma$ be three real numbers satisfying Eq.~\eqref{eq:relation}. If $\alpha>\beta$ then
\begin{equation}\label{eq:ineqRRR}
R_\alpha[f]+D_\beta[f||g]\leqslant H_{\gamma}[f;g].
\end{equation}
In the opposite case $\alpha<\beta$ the inequality is reversed. Moreover, the equality holds if and only if
\begin{equation}\label{eq:main}
g(x)\propto [f(x)]^{\frac{\beta-1}{\beta-\alpha}}.
\end{equation}
\end{theorem}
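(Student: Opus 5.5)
The plan is to rewrite all three functionals as expectations with respect to the escort density of $f$, and then apply Jensen's inequality once, to a power function. Set
\[
I=\int_\Omega f^\alpha\,dx,\qquad J=\int_\Omega f^\beta g^{1-\beta}\,dx,\qquad K=\int_\Omega f\,g^{\gamma-1}\,dx,
\]
so that $R_\alpha[f]=\frac{\log I}{1-\alpha}$, $D_\beta[f||g]=\frac{\log J}{\beta-1}$ and $H_\gamma[f;g]=\frac{\log K}{1-\gamma}$. Introduce the escort density $\phi=f^\alpha/I$ and the function $u=f^{\beta-\alpha}g^{1-\beta}$, which is positive on $\Omega$ by \eqref{cond:support}. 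Then $J/I=\int\phi\,u\,dx$.

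The key algebraic step is to find an exponent $p$ with $u^p=f^{1-\alpha}g^{\gamma-1}$, since then $K/I=\int \phi\,u^p\,dx$. Matching the exponent of $f$ forces $p=\frac{1-\alpha}{\beta-\alpha}$. For the exponent of $g$, I would solve \eqref{eq:relation} for $\gamma$:
\[
\gamma-1=\frac{(1-\alpha)(1-\beta)}{\beta-\alpha}.
\]
This shows that $p(1-\beta)=\gamma-1$ holds exactly. Three further identities will be used:
\[
1-\gamma=\frac{(1-\alpha)(1-\beta)}{\alpha-\beta},\qquad \frac{p}{1-\gamma}=\frac{1}{\beta-1},\qquad \frac1{1-\gamma}-\frac1{1-\alpha}-\frac1{\beta-1}=0 .
\]
The last one is a direct check from the formula for $1-\gamma$.

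With these identities, the quantity $H_\gamma[f;g]-R_\alpha[f]-D_\beta[f||g]$ rearranges as follows. Splitting $\log K=\log(K/I)+\log I$ and $\log J=\log(J/I)+\log I$, the coefficients of $\log I$ cancel, and we get
\[
\frac{1}{1-\gamma}\Big[\log\!\int\phi\,u^p\,dx-p\log\!\int\phi\,u\,dx\Big].
\]
Jensen's inequality for the probability measure $\phi\,dx$ and the function $t\mapsto t^p$ on $(0,\infty)$ determines the sign of the bracket. The bracket is $\geq0$ when $p(p-1)>0$ (convexity) and $\leq0$ when $p(p-1)<0$ (concavity). Since $p-1=\frac{1-\beta}{\beta-\alpha}$, we have $p(p-1)=\frac{(1-\alpha)(1-\beta)}{(\beta-\alpha)^2}$.

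Combining this with the formula for $1-\gamma$, the sign of the whole expression equals the sign of
\[
\frac{p(p-1)}{1-\gamma}=\frac{\alpha-\beta}{(\beta-\alpha)^2}\cdot\frac{(1-\alpha)(1-\beta)}{(1-\alpha)(1-\beta)},
\]
which is the sign of $\alpha-\beta$. This gives \eqref{eq:ineqRRR} for $\alpha>\beta$ and the reversed inequality for $\alpha<\beta$. I expect this sign bookkeeping, organised so that all cases collapse to the single factor $\alpha-\beta$, to be the main point requiring care.

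Some side cases need checking.
\begin{itemize}
\item $p\in\{0,1\}$ cannot occur, since $\alpha,\beta\neq1$.
\item If one of the integrals is infinite, the inequality should be interpreted in the extended sense.
\end{itemize}

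For equality, note that the power map is strictly convex or strictly concave, so Jensen is an equality if and only if $u$ is $\phi\,dx$-a.e. constant. Since $\phi>0$ on $\Omega$, this means $f^{\beta-\alpha}g^{1-\beta}$ is constant on $\Omega$. I would then solve this for $g$ as a power of $f$ and compare with \eqref{eq:main}. My computation gives $g\propto f^{\frac{\alpha-\beta}{1-\beta}}$, whose exponent is the reciprocal of the one displayed in \eqref{eq:main}. So at this step I would recheck the exponent bookkeeping, including whether the statement intends $f$ expressed in terms of $g$, or whether \eqref{eq:main} carries a misprint. Either way, the equality case is exactly that $g$ is, up to normalisation, a power (escort) of $f$.
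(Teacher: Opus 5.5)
Your proof is correct and is essentially the paper's argument. The paper applies Jensen's inequality with weight $f^\alpha$ to $\chi=f^{1-\alpha}g^{\gamma-1}=u^p$ with the power $K=1/p=\frac{\beta-\alpha}{1-\alpha}$, whereas you apply it to $u=\chi^{K}$ with the power $p$. Your reduction of the sign to the single factor $\alpha-\beta$ is a cleaner substitute for the paper's case analysis, and it handles $\alpha<\beta$ at the same time, which the paper leaves implicit.

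You should trust your equality computation. Write the paper's $\chi$ as $\chi=g^{A}f^{B}$ with $A=\gamma-1$ and $B=1-\alpha$. Constancy of $\chi$ gives $g\propto f^{-B/A}=f^{\frac{\beta-\alpha}{\beta-1}}$, but the paper reads off $f^{-A/B}$. So the exponent in \eqref{eq:main} is inverted; equivalently, the printed formula is correct with $f$ and $g$ interchanged.

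For a check, take $\alpha=2$, $\beta=0$, $\gamma=3/2$. The inequality becomes $\left(\int f g^{1/2}\,dx\right)^2\leqslant\int f^2\,dx$, which is Cauchy--Schwarz. Its equality case is $g\propto f^{2}$ (matching your exponent $\frac{\alpha-\beta}{1-\beta}=2$), not $g\propto f^{1/2}$.
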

\begin{proof}
The proof is an application of Jensen's inequality. Indeed, it follows from Jensen's inequality that
	\begin{equation}\label{ineq:Jensen}
		\left(\frac{\int_{\mathbb R}\chi(x)\xi(x)\,dx}{\int_{\mathbb R}\xi(x)\,dx}\right)^K\leqslant \frac{\int_{\mathbb R} [\chi(x)]^K \xi(x)\,dx}{\int_{\mathbb R}\xi(x)\,dx}
	\end{equation}
for any integrable functions $\chi$ and $\xi$, when $K>1$ or $K<0$ (and the opossite inequality for $0<K<1$). Choosing in \eqref{ineq:Jensen}
$$
\chi(x)=[g(x)]^{A}[f(x)]^B \quad {\rm and}\quad \xi(x)=[f(x)]^C
$$
one obtains
	\begin{equation}\label{auxineq}
		\left(\frac{\int_{\mathbb R} [g(x)]^{A}[f(x)]^{B+C}\,dx}{\int_{\mathbb R}[f(x)]^C\,dx}\right)^K\leqslant \frac{\int_{\mathbb R} [g(x)]^{AK}[f(x)]^{BK+C}\,dx}{\int_{\mathbb R}[f(x)]^C\,dx}.
	\end{equation}
Now, if we further particularize \eqref{auxineq} by setting
$$
B=1-\alpha ,\quad  C=\alpha ,\quad AK=1-\beta ,\quad BK+C=\beta ,
$$
or equivalently,
\begin{equation}\label{eqdem:cts}
A=\frac{(1-\beta )(1-\alpha )}{\beta -\alpha },\quad B=1-\alpha ,\quad  C=\alpha ,\quad K=\frac{\beta -\alpha }{1-\alpha },
\end{equation}
we find
\begin{equation}\label{ineq:aux2}
	\left(\int_\mathbb R	\left[g(x)\right]^{\frac{(1-\beta)(1-\alpha )}{\beta -\alpha }}\,f(x)dx \right)^{\frac{\beta -\alpha }{1-\alpha }}\left(\int_{\mathbb R}[f(x)]^\alpha \,dx\right)^{\frac{1-\beta }{1-\alpha }}\leqslant \int_{\mathbb R} [g(x)]^{1-\beta }[f(x)]^{\beta }\,dx.
\end{equation}
In the following cases
\begin{equation}\label{eq:cases}
	\beta <1\;\; {\rm and}\;\; K\notin[0,1],\qquad {\rm as\ well\ as},\qquad \beta >1\;\;{\rm and}\;\;K\in(0,1),
\end{equation}
the inequality \eqref{ineq:aux2} can be rewritten as
\begin{equation}\label{eq:ineqRRR2}
	\left(\int_\mathbb R	\left[g(x)\right]^{\frac{(1-\beta )(1-\alpha )}{\beta -\alpha}}\,f(x)dx \right)^{\frac{\beta -\alpha }{(1-\alpha )(1-\beta )}}\left(\int_{\mathbb R}[f(x)]^\alpha \,dx\right)^{\frac{1}{1-\alpha }}\leqslant \left(\int_{\mathbb R} [g(x)]^{1-\beta }[f(x)]^{\beta }\,dx\right)^\frac1{1-\beta }.
\end{equation}
By taking logarithms in the previous inequality and noting that Eq.~\eqref{eq:relation} implies
	$$
	\frac{(1-\beta )(1-\alpha )}{\beta -\alpha }=\gamma-1,
	$$
we deduce that
\begin{equation*}
-H_{\gamma}[f;g]+R_\alpha [f]\leqslant -D_\beta [f||g],
\end{equation*}
which is obviously equivalent to the inequality~\eqref{eq:ineqRRR}. We are left to describe the conditions which the parameters $\alpha,\beta$ must fulfill. On the one hand, the condition $K>1$ is equivalent to
$$
K-1=\frac{\beta -1}{1-\alpha }>0.
$$
Recalling from \eqref{eq:cases} that $K>1$ implies $\beta<1$, we have $\alpha>1$. On the other hand, the opposite case $K<0$ necessarily implies $\alpha<1$, since again \eqref{eq:cases} entails that $\beta<1$ and thus, if $\alpha>1$, both $\beta-\alpha$ and $1-\alpha$ would be negative. Once established that $\alpha<1$, it immediately follows that $\beta<\alpha$ from the negativity of $K$. Both cases $K>1$ and $K<0$, with $\beta <1$, can be summarized in the condition $\beta <\min\{1,\alpha\}$. In the remaining case $\beta>1$ and $K\in(0,1)$, one finds that $K<1$ implies $\alpha>1$, which in turn implies $\alpha>\beta>1$ after imposing the condition $K>0$. Thus, in both cases $\beta>1$ and $\beta<1$ we have reached the same condition $\beta<\alpha$. Finally, the equality in \eqref{eq:ineqRRR} is achieved when
$$
g(x)\propto [f(x)]^{-\frac AB}=[f(x)]^{\frac{\beta-1}{\beta-\alpha}},
$$
completing the proof.
\end{proof}
\noindent\textbf{Remark.} Note that, in the limiting case $\alpha=\beta=\gamma=1$, we obtain a well-known and very easy identity,
$$
S[f]+D[f||g]=H[f;g],
$$
which can be checked by direct calculation from the definitions. Moreover, let us observe that the inequality~\eqref{eq:ineqRRR2} is also valid for more general functions $f$ and $g$ (dropping the hypothesis of necessarily being probability density functions) whenever the involved integral are finite.

It is worth mentioning that the equality is reached when $g$ is an escort transformation of $f$ in the non-extensive formalism~\cite{Abe2003,Beck2004,Bercher2011}, that is
$$
g(x)=\frac{[f(x)]^{\frac{\beta-1}{\beta-\alpha}}}{\int_\mathbb R [f(t)]^{\frac{\beta-1}{\beta-\alpha}}\,dt}.
$$

\section{Applications to further inequalities}

In this section, we employ the inequality \eqref{eq:ineqRRR} together with a number of measure-preserving transformations in order to obtain new sharp inequalities connecting functionals of interest in Information Theory. Some of these inequalities connect some simple and already well studied functionals, and we expect them to be a starting point for further applications.

\subsection{General framework}\label{subsec:gf}

Let $f$ be a probability density and let $\widetilde f$ be its transformed density through certain measure-preserving transformation $\mathcal{O}$. More precisely, we define
$$
\widetilde f(y)=\mathcal O[f(x)],\quad \widetilde f(y)dy=f(x)dx,
$$
or equivalently
$$
\widetilde f(y)=\mathcal O[f(x(y))],\quad y'(x)=\frac{f(x)}{\mathcal O[f(x)]}.
$$
In order to keep the notation as simple as possible, we employ the simplified notation $\mathcal O[f]$, but we stress here that, in the most general case, the transformation might depend not only on the proper probability density, but also on its derivative function and the variable, that is, $\mathcal O[x,f(x),f'(x)].$ Fixing the density $f$ and the transformation $\mathcal O$ as above, and given a probability density $g$, one can define the following transformation, that will be called \emph{the reciprocal transformation} to $\mathcal{O}$:
\begin{equation}\label{eq:par_transf}
\overline{\mathcal O}[g(x)]=\frac{g(x)}{f(x)}\mathcal O[f(x)],\qquad y'(x)=\frac{f(x)}{\mathcal O[f(x)]}.
\end{equation}
It is obvious that $\overline{\mathcal{O}}$ is a measure-preserving transformation as well.
Moreover, we have the following easy but fundamental property:
\begin{proposition}\label{prop:div}
In the previous notation and definitions, we have
\begin{equation}\label{eq:div}
	D_\gamma[\mathcal O[f]||\overline{\mathcal O}[g]]=D_\gamma[f||g]
\end{equation}
for any $\gamma\in\Rset$.
\end{proposition}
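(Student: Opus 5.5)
The plan is a change of variables in the integral defining the divergence. Write $\widetilde f(y)=\mathcal O[f(x(y))]$ and $\widetilde g(y)=\overline{\mathcal O}[g(x(y))]$, where both use the same change of variables $y=y(x)$ with $y'(x)=f(x)/\mathcal O[f(x)]>0$. The key observation is that the ratio of the two transformed densities equals the ratio of the originals at corresponding points. Indeed, from \eqref{eq:par_transf},
$$
\frac{\widetilde f(y)}{\widetilde g(y)}=\frac{\mathcal O[f(x)]}{\frac{g(x)}{f(x)}\mathcal O[f(x)]}=\frac{f(x)}{g(x)},\qquad y=y(x).
$$

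Next, rewrite the integrand of the Rényi divergence as a density times a power of the ratio:
$$
[\widetilde f(y)]^{\gamma}[\widetilde g(y)]^{1-\gamma}=\widetilde f(y)\left(\frac{\widetilde f(y)}{\widetilde g(y)}\right)^{\gamma-1}.
$$
Then apply the measure-preserving identity $\widetilde f(y)\,dy=f(x)\,dx$, that is, substitute $y=y(x)$. This gives
$$
\int [\widetilde f(y)]^{\gamma}[\widetilde g(y)]^{1-\gamma}\,dy=\int f(x)\left(\frac{f(x)}{g(x)}\right)^{\gamma-1}dx=\int [f(x)]^{\gamma}[g(x)]^{1-\gamma}\,dx.
$$
Taking $\frac{1}{\gamma-1}\log$ of both sides yields \eqref{eq:div} for $\gamma\neq1$.

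For $\gamma=1$ (the Kullback–Leibler case), the same substitution applied to $\int \widetilde f\log(\widetilde f/\widetilde g)\,dy$ gives the result directly. Alternatively, one can pass to the limit $\gamma\to1$.

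The only step needing care is justifying the change of variables. I will assume $\mathcal O[f]>0$ on $\Omega$, so that $y'>0$ and $y$ is a strictly increasing $C^1$ bijection from $\Omega$ onto its image. Under this assumption, $\widetilde g$ is well defined and positive on the same image interval, so the support condition \eqref{cond:support} carries over. The statement should be read as an equality in the extended sense: one side is finite exactly when the other is, since the integrals coincide.
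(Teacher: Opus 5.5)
Your proof is correct and is essentially the paper's argument: a change of variables combined with the invariance of the ratio, $\widetilde f/\widetilde g=f/g$. The only difference is cosmetic. The paper factors the integrand as $(\mathcal O[f]/\overline{\mathcal O}[g])^{\gamma}\,\overline{\mathcal O}[g]$ and uses $\overline{\mathcal O}[g](y)\,dy=g(x)\,dx$, whereas you factor out $\widetilde f$ and use $\widetilde f(y)\,dy=f(x)\,dx$; your separate treatment of the Kullback--Leibler case $\gamma=1$ is a small addition that the paper omits.
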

\begin{proof}
We deduce from the definitions of $\mathcal{O}$, $\overline{\mathcal{O}}$ and of the R\'enyi divergence that
\begin{equation*}
\begin{split}
D_\gamma[\mathcal O[f]||\overline{\mathcal O}[g]]&=\frac{1}{\gamma-1}\log\left(\int_{\Rset}\mathcal{O}[f]^{\gamma}(y)\overline{\mathcal{O}}[g]^{1-\gamma}(y)\,dy\right)\\
&=\frac{1}{\gamma-1}\log\left[\int_{\Rset}\left(\frac{\mathcal{O}[f](y)}{\overline{\mathcal{O}}[g](y)}\right)^{\gamma}\overline{\mathcal{O}}[g](y)\,dy\right]\\
&=\frac{1}{\gamma-1}\log\int_{\Rset}\left(\frac{f(x)}{g(x)}\right)^{\gamma}g(x)\,dx=D_{\gamma}[f||g].
\end{split}
\end{equation*}
\end{proof}
In the next subsections, we apply this general framework to some recently introduced measure-preserving transformations and obtain new inequalities by transporting the inequality~\eqref{eq:ineqRRR}, all them involving, as a consequence of Proposition \ref{prop:div}, the R\'enyi divergence.

\subsection{Diferential-escort transformation}

The first transformation that we shall choose in place of $\mathcal{O}$ in the general framework is the differential-escort transformation, introduced in \cite{Zozor2017, Puertas2019}, which we recall here. If $f$ is a probability density function and $\xi\in\Rset$, then the differential-escort transformation is defined as
\begin{equation}\label{def:escort}
\mathfrak{E}_{\xi}[f](y)=[f(x(y))]^{\xi}, \quad y'(x)=[f(x)]^{1-\xi}.
\end{equation}
We next define the reciprocal transformation to the differential-escort one.
\begin{definition}
Let $\xi\in\mathbb R$ be a real number. Let $\pdf$ be a probability density, and $f_\xi=\mathfrak E_\xi[\pdf]$ its differential-escort transformation. We define the reciprocal transformation $\overline{\mathfrak E}_\xi$ by
\begin{equation}\label{def:rec_escort}
	\overline{\mathfrak E}_\xi[g](y)=\frac{g(x(y))}{f(x(y))}\mathfrak E_\xi[f](y)=g(x(y))[f(x(y))]^{\xi-1},\quad y' (x)=[f(x)]^{1-\xi}.
\end{equation}
\end{definition}
Some basic properties of the transformation $\overline{\mathfrak{E}}$ are listed below.
\begin{lemma}\label{lemma1}
Let $\gamma,\xi$ be real numbers and $f,g$ probability densities. Then
	\begin{equation}\label{eqlem:div}
		D_\gamma\left[\mathfrak E_\xi[f]||\overline{\mathfrak E}_\xi[g]\right]=D_\gamma[f||g].
	\end{equation}
and
	\begin{equation}\label{eqlem:cross}
		H_\gamma\left[\mathfrak E_\xi[f]\,;\,\overline{\mathfrak E}_\xi[g]\right]=H_{\gamma,\xi}[f;g].
	\end{equation}
where
	\begin{equation*}
		H_{\gamma,\xi}[f;g]:=\int_\mathbb{R} [f(x)]^{1+(\xi-1)(\gamma-1)}[g(x)]^{\gamma-1}\,dx.
	\end{equation*}
\end{lemma}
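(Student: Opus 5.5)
The plan is to treat both identities as changes of variables along the map $x\mapsto y(x)$ defined by $y'(x)=[f(x)]^{1-\xi}$. Since $f>0$ on $\Omega$, this map is strictly increasing, hence a diffeomorphism from $\Omega$ onto its image, and I can substitute $dy=[f(x)]^{1-\xi}\,dx$ freely.

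For \eqref{eqlem:div}, I will not compute anything new. The differential-escort transformation is a measure-preserving transformation $\mathcal O$ of the type in Subsection~\ref{subsec:gf}: indeed $\mathfrak E_\xi[f](y)\,dy=[f(x)]^{\xi}[f(x)]^{1-\xi}dx=f(x)\,dx$, which agrees with $y'(x)=f(x)/\mathcal O[f(x)]$. By \eqref{def:rec_escort}, $\overline{\mathfrak E}_\xi$ is exactly the reciprocal transformation \eqref{eq:par_transf} associated with $\mathcal O=\mathfrak E_\xi$. The identity therefore follows directly from Proposition~\ref{prop:div}. For completeness I will also note that $\overline{\mathfrak E}_\xi[g]$ is a probability density: $\overline{\mathfrak E}_\xi[g](y)\,dy=g(x)[f(x)]^{\xi-1}[f(x)]^{1-\xi}dx=g(x)\,dx$.

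For \eqref{eqlem:cross}, I will substitute into the definition of the R\'enyi cross-entropy. The key computation is
$$
\int \mathfrak E_\xi[f](y)\,\big[\overline{\mathfrak E}_\xi[g](y)\big]^{\gamma-1}dy=\int [f(x)]^{\xi}\,[g(x)]^{\gamma-1}[f(x)]^{(\xi-1)(\gamma-1)}\,[f(x)]^{1-\xi}\,dx .
$$
Collecting exponents of $f$ gives $\xi+(\xi-1)(\gamma-1)+1-\xi=1+(\xi-1)(\gamma-1)$, which is the integrand defining $H_{\gamma,\xi}[f;g]$. Applying $\frac{1}{1-\gamma}\log(\cdot)$ to both sides then yields the identity. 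Here I read $H_{\gamma,\xi}$ as carrying the same $\frac1{1-\gamma}\log$ prefactor as $H_\gamma$; otherwise the statement should be read with the logarithm applied to the left-hand side's inner integral.

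The only potential obstacle is bookkeeping rather than mathematics: checking the exponent arithmetic and that the change of variables is legitimate. Legitimacy holds because $y$ is monotone, and both sides are finite simultaneously since they are equal integrals. I will therefore keep the argument to these two substitutions plus the reference to Proposition~\ref{prop:div}.
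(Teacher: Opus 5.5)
Your proposal is correct and follows the paper's proof: the divergence identity is the special case $\mathcal O=\mathfrak E_\xi$ of Proposition~\ref{prop:div}, and the cross-entropy identity comes from the substitution $dy=[f(x)]^{1-\xi}dx$, collecting the exponents of $f$ in the same way. Reading $H_{\gamma,\xi}$ as carrying the prefactor $\frac{1}{1-\gamma}\log$ is the right choice. The paper's definition and proof drop it, but it is needed for consistency with $H_\gamma$ and for the later use of \eqref{eqlem:cross} in Theorem~\ref{th:dif-escort}.
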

\begin{proof}
The identity \eqref{eqlem:div} is a particular case of the equality \eqref{eq:div} established in Proposition \ref{prop:div}. We next compute the following integral
\begin{equation*}
\begin{split}
H_\gamma\left[\mathfrak E_\xi[f]\,;\,\overline{\mathfrak E}_\xi[g]\right]&=\int_\mathbb R\mathfrak E_\xi[f](y)\left[\overline{\mathfrak E}_\xi[g](y)\right]^{\gamma-1}dy\\
&=\int_\mathbb R  f(x)\left[g(x)f(x)^{\xi-1}\right]^{\gamma-1}dx\\
&=\int_\mathbb R [f(x)]^{1+(\xi-1)(\gamma-1)}\left[g(x)\right]^{\gamma-1}dx=H_{\gamma,\xi}[f;g],
\end{split}
\end{equation*}
proving \eqref{eqlem:cross}.
\end{proof}
The following inequality is obtained by combining the general framework applied to the transformations $\mathfrak{E}$ and $\overline{\mathfrak{E}}$ with the inequality \eqref{eq:ineqRRR}.
\begin{theorem}\label{th:dif-escort}
Let $\alpha,\beta,\gamma\in\mathbb R\setminus\{1\}$ be three real numbers satisfying \eqref{eq:relation} and let $f$, $g$ be two probability density functions satisfying \eqref{cond:support}. Then, if $\alpha>\beta$, for any $\xi\in\Rset$, we have
\begin{equation}\label{ineq:dif-escort}
\xi R_{1+(\alpha-1)\xi}[f]+D_\beta[f||g]\leqslant H_{\gamma,\xi}[f;g],
\end{equation}
while if $\alpha<\beta$, the inequality is reversed. The inequality \eqref{ineq:dif-escort} is sharp and the equality is achieved if and only if
\begin{equation}\label{eq:ineq-dif-escort}
g(x)\propto f(x)^{1+\frac{\xi(1-\alpha)}{\alpha-\beta}}.
\end{equation}
\end{theorem}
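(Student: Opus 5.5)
Apply Theorem \ref{theoremRRR} to the pair $f_\xi=\mathfrak E_\xi[f]$ and $g_\xi=\overline{\mathfrak E}_\xi[g]$, then pull each of the three terms back to $f$ and $g$. Both transformed functions are probability densities, since the two transformations are measure preserving, and they share a common support. Theorem \ref{theoremRRR} then gives, for $\alpha>\beta$,
\begin{equation*}
R_\alpha[f_\xi]+D_\beta[f_\xi||g_\xi]\leqslant H_\gamma[f_\xi;g_\xi],
\end{equation*}
with the inequality reversed when $\alpha<\beta$.

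Next we identify each term. By Lemma \ref{lemma1}, \eqref{eqlem:div}, we have $D_\beta[f_\xi||g_\xi]=D_\beta[f||g]$. By \eqref{eqlem:cross}, the right-hand side equals $H_{\gamma,\xi}[f;g]$, read as the cross-entropy-type quantity $\frac{1}{1-\gamma}\log$ of the displayed integral; the lemma is evidently intended in that sense. For the entropy term we change variables with $dy=f^{1-\xi}dx$:
\begin{equation*}
\int [f_\xi(y)]^\alpha\,dy=\int f(x)^{\xi\alpha+1-\xi}\,dx=\int f^{1+(\alpha-1)\xi}\,dx .
\end{equation*}
Since $\frac{1}{1-\alpha}=\frac{\xi}{1-(1+(\alpha-1)\xi)}$, this yields $R_\alpha[f_\xi]=\xi R_{1+(\alpha-1)\xi}[f]$. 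Here $\xi\neq0$ is implicit; the degenerate cases $\xi=0$ and $\alpha=1$ need only a brief remark, or a limiting argument, since then the Rényi index equals $1$. Substituting these three identities gives \eqref{ineq:dif-escort}, together with the reversed version for $\alpha<\beta$.

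Finally, for the equality case, Theorem \ref{theoremRRR} says equality holds if and only if $g_\xi\propto f_\xi^{\frac{\beta-1}{\beta-\alpha}}$ as functions of $y$. Rewriting in $x$, this reads
\begin{equation*}
g\,f^{\xi-1}\propto f^{\xi\frac{\beta-1}{\beta-\alpha}},\qquad\text{i.e.}\qquad g\propto f^{1-\xi+\xi\frac{\beta-1}{\beta-\alpha}}.
\end{equation*}
The exponent simplifies:
\begin{equation*}
1+\xi\,\frac{\beta-1-\beta+\alpha}{\beta-\alpha}=1+\frac{\xi(\alpha-1)}{\beta-\alpha}=1+\frac{\xi(1-\alpha)}{\alpha-\beta},
\end{equation*}
which is exactly \eqref{eq:ineq-dif-escort}.

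The main obstacle is mostly bookkeeping: tracking the change of variables $y(x)$, checking that the transformed functions satisfy the support hypotheses needed in Theorem \ref{theoremRRR}, and handling the degenerate values of $\xi$.
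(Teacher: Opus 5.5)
Your proposal is correct and follows the paper's proof: apply Theorem \ref{theoremRRR} to $\mathfrak E_\xi[f]$ and $\overline{\mathfrak E}_\xi[g]$, use Lemma \ref{lemma1} for the divergence and cross-entropy terms, use the identity $R_\alpha[\mathfrak E_\xi[f]]=\xi R_{1+(\alpha-1)\xi}[f]$ (which the paper cites and you verify directly by change of variables), and pull back the equality condition \eqref{eq:main}. Two additions are sound: you read $H_{\gamma,\xi}[f;g]$ as $\frac{1}{1-\gamma}\log$ of the displayed integral, which is how it must be read for the inequality to make sense, and you flag the degenerate case $\xi=0$.
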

\begin{proof}
Assume that $\alpha>\beta$. We apply the inequality \eqref{eq:ineqRRR} to the transformed probability density functions $\mathfrak E_\xi[f]$ and $\overline{\mathfrak E}[g]$ to obtain
\begin{equation}\label{eq:interm1}
			R_\alpha[\mathfrak E_\xi[f]]+D_\beta[\mathfrak E_\xi[f]||\overline{\mathfrak E}_\xi[g]]\leqslant H_{\gamma}[\mathfrak E_\xi[f];\overline{\mathfrak E}_\xi[g]].
\end{equation}
Recalling that (see \cite{Puertas2019})
$$
R_\alpha[\mathfrak E_\xi[f]]=\xi R_{1+(\alpha-1)\xi}[f],
$$
the inequality \eqref{ineq:dif-escort} readily follows as a consequence \eqref{eq:interm1}, \eqref{eqlem:div} and \eqref{eqlem:cross}. It is obvious that the inequality \eqref{ineq:dif-escort} is reversed for $\alpha<\beta$, since the inequality sign is inherited from \eqref{eq:ineqRRR}. The equality is achieved when $\mathfrak{E}_{\xi}[f]$ and $\overline{\mathfrak{E}}_{\xi}[g]$ satisfy \eqref{eq:main}, that is,
$$
g(x)f(x)^{\xi-1}=\overline{\mathfrak{E}}_{\xi}[g]\propto(\mathfrak{E}_{\xi}[f])^{\frac{1-\beta}{\alpha-\beta}}=f(x)^{\frac{\xi(1-\beta)}{\alpha-\beta}},
$$
which leads to \eqref{eq:ineq-dif-escort}.
\end{proof}

Let us observe that the inequality~\eqref{eq:ineqRRR2} essentially preserves the \textit{minimizing relation} between $f$ and $g$, that is, $g$ being a escort density of $f$.

\subsection{Relative differential-escort transformation}

The next transformation employed is the recently introduced relative differential-escort transformation, see \cite{IPT2025}. We first recall its definition here for the sake of completeness.
\begin{definition}
Let $f$ and $h$ be two probability density functions satisfying \eqref{cond:support} and $\xi\in\Rset$. We define the relative differential-escort transformed density of $\xi$-order of $f$ as
\begin{equation}\label{def:rel-escort}
\mathfrak R_\xi^{h}[\pdf](y):=\left(\frac{\pdf(x(y))}{h(x(y))}\right)^\xi,\quad y'(x)=\pdf(x)^{1-\xi}h(x)^{\xi}.
\end{equation}
\end{definition}
We next introduce the reciprocal transformation, according to the general framework given in Section \ref{subsec:gf}.
\begin{definition}
Given three probability density functions $f$, $g$, $h$ satisfying \eqref{cond:support} and a real parameter $\xi$, we define the transformed density of $g$, for fixed probability densities $f$ and $h$, as:
\begin{equation}\label{def:rec-rel-escort}
\overline{\mathfrak R^{h}_\xi}[g](y)=\frac{g(x)}{f(x)}\left(\frac{f(x)}{h(x)}\right)^{\xi},\qquad y'(x)=[f(x)]^{1-\xi}[h(x)]^\xi.
\end{equation}
\end{definition}
In order to state the next inequality, we first introduce a new informational functional that has an interesting expression, combining in some sense (as indicated below) the properties of a cross-entropy and of a divergence in some particular cases. This is why, we decided to give this functional the name of \emph{cross-divergence}.
\begin{definition}[Cross-divergence]
Let $f,g,h$ three probability density functions satisfying \eqref{cond:support}, and let $a,b$ be two real numbers. The cross-divergence of $(a,b)$-order of the functions $f$ and $g$ with reference function $h$ is defined as
\begin{equation}\label{def:cross-div}
	\widetilde{H}_{a,b}[f;g||h]=\frac{1}{1-a}\log\int_\mathbb{R} f(x)\left(\frac{[f(x)]^{b-1} g(x)}{[h(x)]^b}\right)^{a-1}\,dx.
\end{equation}
In the particular case $b=1$, we will denote
\begin{equation*}
\widetilde{H}_{a}[f;g||h]=\widetilde{H}_{a,1}[f;g||h]=\frac{1}{1-a}\log\int_\mathbb{R} f(x)\left(\frac{g(x)}{h(x)}\right)^{a-1}\,dx.
\end{equation*}
\end{definition}

\noindent \textbf{Remark. Particular cases.} Note that the functional $\widetilde{H}_{a,b}[f;g||h]$ reduces to a divergence whenever $f=g,\,g=h$ or $f=h$. Indeed, when $f=g$ we obtain
$$
\widetilde{H}_{a,b}[f;f||h]=-bD_{1+b(a-1)}[f||h],
$$
while if $g=h$ then
$$
\widetilde{H}_{a,b}[f;h||h]=(1-b)D_{1+(a-1)(b-1)}[f||h].
$$
In particular, if $b=1$ we get $\widetilde{H}_{a}[f;h||h]=0$. Finally, if $f=h$ the influence of the parameter $b$ disappears and we obtain
$$
\widetilde{H}_{a,b}[f;g||f]=D_{2-a}[f||g].
$$
Several other interesting particular cases are listed below, for any pair of probability density functions $f$ and $g$:
\begin{itemize}
	\item When $b=0$, one has $$\widetilde{H}_{a,0}[f;g||h]=D_{2-a}[f||g].$$
	\item When $a=2$ follows
	$$
\widetilde H_{2,b}[f;g||h]=-\log \int_\mathbb R g(x)\left(\frac{f(x)}{h(x)}\right)^b \,dx=b \widetilde H_{b+1}[g;f||h].
	$$
	\item When $(1-a)b=1$ follows
	$$
	\widetilde H_{a,b}[f;g||h]=\frac1{1-a}\log \int_\mathbb R h(x)\left(\frac{g(x)}{f(x)}\right)^{a-1} \,dx=\widetilde H_{a}[h;g||f].
	$$
	\item Letting $\overline a $ and $\overline b$ such that
	$$b(1-a)=\overline a-1\quad {\rm and}\quad a-1=\overline b(1-\overline a),
	$$ or equivalently,
	$$\overline a=1+b(1-a),\quad {\rm and}\quad \overline b=1/b,
	$$
we have
	\begin{eqnarray*}
	\widetilde H_{a,b}[f;g||h]&=&\frac1{1-a}\log \int_\mathbb R [f(x)]^{1+(a-1)(b-1)}[g(x)]^{a-1}[h(x)]^{b(1-a)} \,dx
	\\
	&=&
	\frac1{1-a}\log \int_\mathbb R [f(x)]^{1+(\overline a-1)(\overline b-1)}[g(x)]^{\overline b(1-\overline a)}[h(x)]^{\overline a-1} \,dx
	\\
	&=&
	-b \widetilde H_{\overline a,\overline b}[f;h||g].
	\end{eqnarray*}
\end{itemize}

We now state and prove an inequality relating the R\'enyi divergence and the cross-divergence.
\begin{theorem}
Let $f,g,h$ be three probability density functions satisfying \eqref{cond:support} and let $\alpha,\beta,\gamma$ be three real numbers satisfying the relation \eqref{eq:relation}. Let $\xi$ be a real number. If $\alpha>\beta$, then
\begin{equation}\label{ineq:rel-dif-escort}
	D_\beta[f||g]-\xi D_{1+(\alpha-1)\xi}[f||h]\leqslant \widetilde H_{\gamma,\xi}[f;g||h].
\end{equation}
In the particular case $\xi=1$ one obtains
\begin{equation*}
D_\beta\left[f||g\right]- D_{\alpha}[f||h]\leqslant \widetilde H_{\gamma}[f;g||h].
\end{equation*}
When $\alpha<\beta$ the previous inequalities are reversed. The equality in \eqref{ineq:rel-dif-escort} is achieved if and only if
\begin{equation}\label{eq:ineq-rel-dif-escort}
g(x)\propto f(x)\left(\frac{f(x)}{h(x)}\right)^{\frac{\xi(1-\alpha)}{\alpha-\beta}}.
\end{equation}
\end{theorem}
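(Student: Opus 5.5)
We follow the same scheme as in the proof of Theorem~\ref{th:dif-escort}, replacing the differential-escort pair by the relative pair $\mathfrak R^h_\xi$, $\overline{\mathfrak R^h_\xi}$. Assume first $\alpha>\beta$. Apply Theorem~\ref{theoremRRR} to the transformed densities $F=\mathfrak R^h_\xi[f]$ and $G=\overline{\mathfrak R^h_\xi}[g]$, which share the same change of variables $y'(x)=f^{1-\xi}h^{\xi}$:
\begin{equation*}
R_\alpha[F]+D_\beta[F||G]\leqslant H_\gamma[F;G].
\end{equation*}
The relative transformation is exactly of the form of the general framework, with $\mathcal O[f]=(f/h)^\xi$. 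Hence Proposition~\ref{prop:div} gives $D_\beta[F||G]=D_\beta[f||g]$.

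It then remains to compute the other two terms by changing variables $dy=f^{1-\xi}h^\xi\,dx$.

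For the entropy term,
\begin{equation*}
\int F^\alpha\,dy=\int f^{\alpha\xi+1-\xi}h^{\xi-\alpha\xi}\,dx=\int f^{1+(\alpha-1)\xi}h^{-(\alpha-1)\xi}\,dx .
\end{equation*}
Setting $\tau=1+(\alpha-1)\xi$, this equals $\exp\big((\tau-1)D_\tau[f||h]\big)$. Since $\tau-1=(\alpha-1)\xi$, we get
\begin{equation*}
R_\alpha[F]=\frac{1}{1-\alpha}(\alpha-1)\xi D_\tau[f||h]=-\xi D_{1+(\alpha-1)\xi}[f||h].
\end{equation*}

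For the cross-entropy term, using $F\,dy=f\,dx$,
\begin{equation*}
H_\gamma[F;G]=\frac{1}{1-\gamma}\log\int f\left(\frac{g}{f}\Big(\frac fh\Big)^{\xi}\right)^{\gamma-1}dx=\frac{1}{1-\gamma}\log\int f\left(\frac{f^{\xi-1}g}{h^\xi}\right)^{\gamma-1}dx,
\end{equation*}
which is precisely $\widetilde H_{\gamma,\xi}[f;g||h]$ by \eqref{def:cross-div}.

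Substituting these three identities into the inequality for $F,G$ yields \eqref{ineq:rel-dif-escort}. The case $\xi=1$ follows immediately, since $\widetilde H_{\gamma,1}=\widetilde H_\gamma$. For $\alpha<\beta$ the inequality sign is inherited reversed from Theorem~\ref{theoremRRR}.

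For the equality case, Theorem~\ref{theoremRRR} gives equality if and only if $G\propto F^{\frac{\beta-1}{\beta-\alpha}}=F^{\frac{1-\beta}{\alpha-\beta}}$. Written in the $x$ variable this reads
\begin{equation*}
g f^{\xi-1}h^{-\xi}\propto (f/h)^{\frac{\xi(1-\beta)}{\alpha-\beta}},
\end{equation*}
that is,
\begin{equation*}
g\propto f\,(f/h)^{\xi\left(\frac{1-\beta}{\alpha-\beta}-1\right)}=f\,(f/h)^{\frac{\xi(1-\alpha)}{\alpha-\beta}},
\end{equation*}
which is \eqref{eq:ineq-rel-dif-escort}. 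Proportionality is preserved under the pointwise correspondence $x\mapsto y$, so the "if and only if" carries over.

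The main point needing care is the exponent bookkeeping in the change of variables, in particular checking that the sign conventions turn $R_\alpha[F]$ into $-\xi D_{1+(\alpha-1)\xi}[f||h]$ rather than $+\xi D$. We also need to check that $F$ and $G$ are genuine probability densities on a common interval, which holds because both transformations are measure-preserving with the same $y(x)$, monotone since $y'>0$.
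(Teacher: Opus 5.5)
Your proposal is correct and follows the paper's proof step by step. You apply Theorem~\ref{theoremRRR} to the pair $\mathfrak R^h_\xi[f]$, $\overline{\mathfrak R^h_\xi}[g]$, use Proposition~\ref{prop:div} for the divergence term, identify the cross-entropy term as $\widetilde H_{\gamma,\xi}[f;g||h]$, and pull back the equality condition. The only difference is that you verify $R_\alpha[\mathfrak R^h_\xi[f]]=-\xi D_{1+(\alpha-1)\xi}[f||h]$ by a direct change of variables, whereas the paper cites Lemma~3.2 of \cite{IPT2025}.
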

\begin{proof}
Assume that $\alpha>\beta$. The inequality~\eqref{eq:ineqRRR} applied to the pair of transformed densities $\mathfrak{R}^{h}_\xi[f]$ and $\overline{\mathfrak{R}^{h}_\xi}[g]$ yields
\begin{equation}\label{eq:interm2}
R_\alpha[\mathfrak{R}^{h}_\xi[f]]+D_\beta\left[\mathfrak{R}^{h}_\xi[f]\,\big|\big|\,\overline{\mathfrak{R}^{h}_\xi}[g]\right]\leqslant H_\gamma[\mathfrak{R}^{h}_\xi[f];\overline{\mathfrak{R}^{h}_\xi}[g]].
\end{equation}
The fact that
$$
D_\beta\left[\mathfrak{R}^{h}_\xi[f]\,\big|\big|\,\overline{\mathfrak{R}^{h}_\xi}[g]\right]=D_\beta[f||g]
$$
is a particular case of the equality \eqref{eq:div}. For the first term in \eqref{eq:interm2}, we recall from \cite[Lemma 3.2]{IPT2025} (after taking logarithms in the equality therein) that
$$
R_\alpha[\mathfrak{R}^{h}_\xi[f]]=-\xi D_{1+(\alpha-1)\xi}[f||h].
$$
Finally,
\begin{eqnarray*}
	H_\gamma[\mathfrak{R}^{h}_\xi[f];\overline{\mathfrak{R}^{h}_\xi}[g]]&=&\frac1{1-\gamma}\log\int_\mathbb R  \mathfrak{R}^{h}_\xi[f](y)\left(\overline{\mathfrak{R}^{h}_\xi}[g](y)\right)^{\gamma-1}\,dy
	\\
	&=&
	\frac1{1-\gamma}\log\int_\mathbb R  f(x)\left([f(x)]^{\xi-1}\,g(x)\,[h(x)]^{-\xi}\right)^{\gamma-1}\,dx
	\\
	&=&
	\frac1{1-\gamma}\log\int_\mathbb R  [f(x)]^{1+(\gamma-1)(\xi-1)}[g(x)]^{\gamma-1}[h(x)]^{\xi(1-\gamma)}\,dx\\
&=& \widetilde{H}_{\gamma,\xi}[f;g||h].
\end{eqnarray*}
The inequality \eqref{ineq:rel-dif-escort} follows easily by replacing the previous identities in \eqref{eq:interm2}. It is obvious that the inequality sign is reversed if $\alpha<\beta$, since the inequality is inherited from \eqref{eq:ineqRRR}. The equality in \eqref{ineq:rel-dif-escort} is achieved when, according to \eqref{eq:main}, the following proportionality holds true:
$$
\frac{g(x)}{f(x)}\left(\frac{f(x)}{h(x)}\right)^{\xi}=\overline{\mathfrak{R}^{h}_\xi}[g]\propto\mathfrak{R}^{h}_\xi[f]^{\frac{1-\beta}{\alpha-\beta}}
=\left(\frac{f(x)}{h(x)}\right)^{\frac{\xi(1-\beta)}{\alpha-\beta}},
$$
which gives \eqref{eq:ineq-rel-dif-escort}, completing the proof.
\end{proof}

\subsection{Biparametric down transformation}

The next transformation that we employ as a particular case of the general framework given in Section \ref{subsec:gf} is the biparametric down transformation. This transformation has been introduced and thoroughly studied by the authors in connection with new informational functionals in the recent work \cite{IP2025(c)} and we recall its definition below.
\begin{definition}\label{def:bip_down}
Let $f:(c,d)\mapsto\Rset$ be a derivable probability density function such that $f'(x)<0$ for any $x\in(c,d)$, where $-\infty<c<d\leqslant\infty$. For $a$, $b\in\Rset$, we define the \emph{biparametric down transformation} by
\begin{equation}\label{eq:bip_down}
\mathfrak{D}_{a,b}[f](y):=\frac{f(x)^{a}}{|f'(x)|^{b}}, \quad y'(x)=f(x)^{1-a}|f'(x)|^{b}.
\end{equation}
\end{definition}
Let us recall here that, for $b=1$, the down transformation $\mathfrak{D}_{a,1}\equiv\mathfrak{D}_{a}$ had been previously defined and its applications studied in \cite{IP2025, IP2025(b)}. Starting from this definition, we can introduce the reciprocal transform $\overline{\mathfrak{D}}_{a,b}$ as a particular case of \eqref{eq:par_transf} adapted to the transformation $\mathfrak{D}_{a,b}$.
\begin{definition}\label{def:rec-down}
Let $a,b$ be real numbers, and let $f$ and $g$ be two probability density functions such that $f$ is decreasing and derivable. We define the following transformation
\begin{equation}\label{eq:rec-down}
	\overline{\mathfrak D}_{a,b}[g](y)=\frac{g(x)}{f(x)}\frac{[f(x)]^a}{|f'(x)|^b}, \qquad y(x)=[f(x)]^{1-a}|f' (x)|^{b}.
\end{equation}
\end{definition}

The following informational functional, named generalized Fisher information, was introduced by Lutwak and Bercher~\cite{Lutwak2005, Bercher2012, Bercher2012a} and it acts on derivable probability density functions.
	\begin{definition}[Generalized Fisher information]
		Given $p>1$ and $\lambda \in \Rset^*$, the $(p,\lambda)$-Fisher information of a probability density function $f$ is defined as
\begin{equation}\label{eq:def_FI}
	\phi_{p,\lambda}[\pdf]=\big(F_{p,\lambda}[\pdf]\big)^{\frac{1}{p\lambda}},\quad F_{p,\lambda}[\pdf]=\int_\Rset [f(x)]^{1+p(\lambda-2)} \left|\frac{d\pdf}{dx}(x)\right|^{p}  \, dx
\end{equation}
whenever $\pdf$ is differentiable on the closure of its support. In particular, the standard Fisher information is recovered as the $(2,1)$-Fisher information.
\end{definition}
Before stating a similar sharp inequality as in the previous sections, we also introduce the following informational functional, which we have called the \emph{generalized cross-Fisher information}.
\begin{definition}[Generalized cross-Fisher information]\label{def:cross-Fisher}
Let $a,b,c$ be real numbers, and $f$ a derivable probability density. The $(a,b,c)$-cross-Fisher information of $f$ relative to a probability density $g$ is defined as
\begin{equation}
\begin{split}
	\phi^{\rm (cr)}_{a,b,c}[f;g]:&=\left(\int_{\mathbb R}[f(x)]^{1+(a-1)c} [g(x)]^{-c}|f' (x)|^{bc}\right)^{\frac1{c}}\\
&=\left(\int_{\mathbb R}[f(x)]^{1+(a-2)c} \left(\frac{f(x)}{g(x)}\right)^{c}|f' (x)|^{bc}\right)^{\frac1{c}}.
\end{split}
\end{equation}
\end{definition}
Note that, in the particular case $b=1$ and $f=g$, $\phi^{\rm (cr)}_{a,1,c}[f;f]$ reduces to the standard biparametric Fisher information (see for example \cite{Zozor2017}). This equality justifies the given name of cross-Fisher information.
\begin{theorem}\label{th:bip-down}
Let $\alpha$, $\beta$, $\gamma$ be three real numbers satisfying \eqref{eq:relation} and $f$, $g$ be two probability density functions satisfying \eqref{cond:support} such that $f$ is decreasing and derivable. Then, for any $a$, $b\in\Rset$ such that $b\neq0$ and $a\neq2b$, we have the following inequality:
\begin{equation}\label{ineq:bip-down}
\phi_{(1-\alpha)b,2-\frac ab}^{2b-a}[f]e^{D_\beta[f||g]}\leqslant\phi^{\rm (cr)}_{2-a,b,1-\gamma}[f;g].
\end{equation}
The equality in \eqref{ineq:bip-down} is attained if and only if
\begin{equation}\label{eq:ineq-bip-down}
g(x)\propto f(x)^{A}|f'(x)|^{B}, \quad A=1+\frac{a(1-\alpha)}{\alpha-\beta}, \quad B=\frac{b(\alpha-1)}{\alpha-\beta}.
\end{equation}
\end{theorem}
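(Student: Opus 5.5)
The plan is to run the general framework of Section~\ref{subsec:gf} with $\mathcal O=\mathfrak D_{a,b}$ and its reciprocal $\overline{\mathfrak D}_{a,b}$ from Definition~\ref{def:rec-down}, exactly as in the proofs of Theorem~\ref{th:dif-escort} and the relative differential-escort theorem. I assume, as in those results, that $\alpha>\beta$, which the statement seems to leave implicit; for $\alpha<\beta$ the inequality reverses.

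First I apply Theorem~\ref{theoremRRR} to the pair $\mathfrak D_{a,b}[f]$, $\overline{\mathfrak D}_{a,b}[g]$. Both are probability densities: the change of variables $y'(x)=f^{1-a}|f'|^b>0$ is strictly monotone because $f'<0$ on $(c,d)$, and both transformations preserve measure. This gives
$$
R_\alpha[\mathfrak D_{a,b}[f]]+D_\beta[\mathfrak D_{a,b}[f]\,||\,\overline{\mathfrak D}_{a,b}[g]]\leqslant H_\gamma[\mathfrak D_{a,b}[f];\overline{\mathfrak D}_{a,b}[g]].
$$
By Proposition~\ref{prop:div}, the divergence term equals $D_\beta[f||g]$.

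Next I identify the two remaining terms by changing variables back to $x$ with $dy=f^{1-a}|f'|^b\,dx$.
\begin{itemize}
\item For the R\'enyi entropy,
$$
\int (\mathfrak D_{a,b}[f])^\alpha dy=\int f^{1+a(\alpha-1)}|f'|^{b(1-\alpha)}dx .
$$
With $p=(1-\alpha)b$ and $\lambda=2-\frac ab$ we have $1+p(\lambda-2)=1+a(\alpha-1)$, so this integral is $F_{p,\lambda}[f]$. Since $p\lambda=(1-\alpha)(2b-a)$, which is nonzero by the hypotheses $b\neq0$ and $a\neq 2b$, we get $\phi_{p,\lambda}^{2b-a}=F_{p,\lambda}^{1/(1-\alpha)}=e^{R_\alpha[\mathfrak D_{a,b}[f]]}$.
\item For the cross-entropy,
$$
\int \mathfrak D_{a,b}[f]\,(\overline{\mathfrak D}_{a,b}[g])^{\gamma-1}dy=\int f\,(g f^{a-1}|f'|^{-b})^{\gamma-1}dx=\int f^{1+(1-a)(1-\gamma)}g^{\gamma-1}|f'|^{b(1-\gamma)}dx .
$$
This is exactly the integrand of Definition~\ref{def:cross-Fisher} with parameters $(2-a,b,1-\gamma)$. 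Hence $e^{H_\gamma}=\phi^{\rm (cr)}_{2-a,b,1-\gamma}[f;g]$.
\end{itemize}
Exponentiating the inequality above then yields \eqref{ineq:bip-down}.

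For equality, \eqref{eq:main} requires $\overline{\mathfrak D}_{a,b}[g]\propto(\mathfrak D_{a,b}[f])^{\frac{1-\beta}{\alpha-\beta}}$, that is,
$$
g f^{a-1}|f'|^{-b}\propto f^{\frac{a(1-\beta)}{\alpha-\beta}}|f'|^{-\frac{b(1-\beta)}{\alpha-\beta}} .
$$
Solving for $g$ gives exponent $1-a+\frac{a(1-\beta)}{\alpha-\beta}=1+\frac{a(1-\alpha)}{\alpha-\beta}$ on $f$ and $b-\frac{b(1-\beta)}{\alpha-\beta}=\frac{b(\alpha-1)}{\alpha-\beta}$ on $|f'|$. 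These are the values in \eqref{eq:ineq-bip-down}.

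The main obstacle is not algebraic but bookkeeping. One must check that the exponent $2b-a$ and the $1/(p\lambda)$ in the definition of $\phi_{p,\lambda}$ combine to give exactly $e^{R_\alpha}$; this is where $b\neq0$ and $a\neq2b$ are used. One must also check that the sign convention of the inequality matches the case $\alpha>\beta$ of Theorem~\ref{theoremRRR}, since $p=(1-\alpha)b$ may be negative. Exponentiation is monotone, so no further sign issue arises.
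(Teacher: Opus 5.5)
Your proposal is correct and follows the paper's proof: you apply Theorem~\ref{theoremRRR} to $\mathfrak D_{a,b}[f]$ and $\overline{\mathfrak D}_{a,b}[g]$, use Proposition~\ref{prop:div} for the divergence term, identify $e^{R_\alpha}$ with $\phi^{2b-a}_{(1-\alpha)b,2-\frac ab}[f]$ (you check this by a direct change of variables, where the paper cites \cite[Lemma 3.1]{IP2025(c)}) and $e^{H_\gamma}$ with the cross-Fisher information, and read off the equality case from \eqref{eq:main}. Your explicit assumption $\alpha>\beta$, with the inequality reversed for $\alpha<\beta$, is the right reading of the statement, since the paper's proof inherits this sign condition from Theorem~\ref{theoremRRR} without restating it.
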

\begin{proof}
Starting from inequality~\eqref{eq:ineqRRR} applied to $\mathfrak D_{a,b}[f]$ and $\overline{\mathfrak D}_{a,b}[g]$, we obtain
\begin{equation}\label{eq:interm3}
R_\alpha[\mathfrak D_{a,b}[f]]+D_\beta[\mathfrak D_{a,b}[f]||\overline{\mathfrak D}_{a,b}[g]]\leqslant H_{\gamma}[\mathfrak D_{a,b}[f];\overline{\mathfrak D}_{a,b}[g]].
\end{equation}
Since it is assumed that $b\neq0$ and $a\neq2b$, we recall from \cite[Lemma 3.1]{IP2025(c)} that
$$
e^{R_\alpha[\mathfrak D_{a,b}[f]]}=N_{\alpha}[\mathfrak{D}_{a,b}[f]]=\phi_{(1-\alpha)b,2-\frac ab}^{2b-a}[f].
$$
Moreover, the identity
\begin{equation}\label{eq:interm5}
D_\beta[\mathfrak D_{a,b}[f]||\overline{\mathfrak D}_{a,b}[g]]=D_{\beta}[f||g],
\end{equation}
follows as a consequence of the general identity \eqref{eq:div}. Taking exponentials, Eq. \eqref{eq:interm3} reads
\begin{equation}\label{eq:interm4}
\phi_{(1-\alpha)b,2-\frac ab}^{2b-a}[f]e^{D_\beta[f||g]}\leqslant \exp\left\{H_{\gamma}[\mathfrak D_{a,b}[f];\overline{\mathfrak D}_{a,b}[g]]\right\}.
\end{equation}
It only remains to calculate the right-hand side of \eqref{eq:interm4}. We have
\begin{eqnarray*}
\exp\{H_{\gamma}[\mathfrak D_{a,b}[f];\overline{\mathfrak D}_{a,b}[g]]\}
&=&
\left[\int_\mathbb R \mathfrak D_{a,b}[f](y)\left(\overline{\mathfrak D}_{a,b}[g](y)\right)^{\gamma-1}dy\right]^{\frac{1}{1-\gamma}}
\\
&=& \left[\int_\mathbb R f(x)\left(\frac{g(x) [f(x)]^{a-1}}{|f' (x)|^b}\right)^{\gamma-1}\,dx\right]^{\frac{1}{1-\gamma}}
\\
&=& \left[\int_\mathbb R [f(x)]^{1+(a-1)(\gamma-1)} [g(x)]^{\gamma-1}|f' (x)|^{b(1-\gamma)}\,dx\right]^{\frac{1}{1-\gamma}}\\
&=& \phi^{\rm (cr)}_{2-a,b,1-\gamma}[f;g],
\end{eqnarray*}
and a substitution of the final expression in \eqref{eq:interm4} completes the proof of \eqref{ineq:bip-down}. The equality case in \eqref{ineq:bip-down} is inherited from the equality case in \eqref{eq:ineqRRR}; that is,
$$
\frac{g(x)}{f(x)}\frac{f(x)^{a}}{|f'(x)|^{b}}=\overline{\mathfrak{D}}_{a,b}[g]\propto\mathfrak{D}_{a,b}[f]^{\frac{1-\beta}{\alpha-\beta}}
=\left(\frac{f(x)^a}{|f'(x)|^b}\right)^{\frac{1-\beta}{\alpha-\beta}},
$$
that is,
$$
g(x)\propto f(x)\left(\frac{f(x)^a}{|f'(x)|^b}\right)^{\frac{1-\alpha}{\alpha-\beta}},
$$
which is obviously equivalent to \eqref{eq:ineq-bip-down}, as claimed.
\end{proof}

\noindent \textbf{Remark.} Direct calculations show that, in the case when $f$ is an exponential density, the equality holds when $g$ is also an exponential density. This also happens for $f$ and $g$ being q-exponential densities. However, when $f$ is a Gaussian density, the equality is reached for $g$ being a Rayleigh density if $\beta=\alpha+b(1-\alpha)$, or for generalized Gamma distributions for other values of $\beta$. More generally, if $f$ is a generalized normal distribution $f\propto e^{-|x|^k}$, then the equality is reached in the case that $g$ is a Weibull probability density for $\beta=\alpha+b(1-\alpha)$ or a generalized Gamma distribution in the general case. 

Taking into account the structure of informational functionals introduced in \cite{IP2025, IP2025(c)}, we can go one step below by applying once more the down transformation and derive an inequality relating the divergence $D_{\beta}[f||g]$ to the down-Fisher measure. In order to state the inequality, we first recall the definition of the down-Fisher measure.
\begin{definition}[Down-Fisher measure]\label{def:sub-fisher}
Let $f$ be a differentiable up to the second order and monotone probability density function and $p$, $q$, $\lambda$ three real numbers such that $p\neq q$. The down-Fisher measure is defined as
\begin{equation}\label{eq:sub-fisher}
\varphi_{p,q,\lambda}[\pdf]:=\int_{\Rset} f(v)^{1+p(\lambda-2)}|f'(v)|^{q}\left|\frac{p\lambda}{p-q}-\frac{\pdf(v) \pdf''(v)}{(\pdf'(v))^2}\right|^p\,dv.
\end{equation}	
\end{definition}
Together with the down-Fisher measure, we can introduce the \emph{cross-down-Fisher measure}, which is defined below:
\begin{equation}\label{eq:cross-down-Fisher}
\varphi_{a,b,c,\xi}[f;g]:=\int_{\Rset}\left(\frac{|f'(x)|^{a-b}}{f(x)^{a\xi+2b(1-\xi)}}\frac{f(x)}{g(x)}
\left|\xi-\frac{f(x)f''(x)}{(f'(x))^2}\right|\right)^{c}f(x)\,dx.
\end{equation}
With this notion in mind, we can write our next inequality.
\begin{theorem}\label{th:down-Fisher}
Let $\alpha$, $\beta$, $\gamma$ be three real numbers satisfying \eqref{eq:relation} and $f$, $g$ be two decreasing probability density functions satisfying \eqref{cond:support} and with $f$ such that
\begin{equation}\label{cond:down_twice}
	\sup\limits_{x\in\Rset}\frac{f(x)f''(x)}{(f'(x))^2}<\xi.
	\end{equation}
Then, for any $a$, $b$, $\xi\in\Rset$ such that $a\neq2b$ and $b\neq0$, we have the following inequality:
\begin{equation}\label{ineq:down-Fisher}
\varphi_{(1-\alpha)b,(1-\alpha)(a-b),\xi\left(2-\frac{a}{b}\right)}[f]^{\frac{1}{1-\alpha}}e^{D_{\beta}[f||g]}\leq \left[\varphi_{a,b,1-\gamma,\xi}^{\rm (cr)}[f;g]\right]^{\frac{1}{1-\gamma}}.
\end{equation}
The inequality \eqref{ineq:down-Fisher} is sharp and the condition for equality is given after the proof.
\end{theorem}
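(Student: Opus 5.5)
The plan is to iterate the construction of Theorem~\ref{th:bip-down}: first apply the down transformation with exponents $(\xi,1)$ to the pair $(f,g)$, and then apply Theorem~\ref{th:bip-down} with exponents $(a,b)$ to the transformed pair. Concretely, set
$$
F:=\mathfrak D_{\xi,1}[f],\qquad G:=\overline{\mathfrak D}_{\xi,1}[g],
$$
so that $F(y)=f(x)^{\xi}/|f'(x)|$ and $G(y)=g(x)f(x)^{\xi-1}/|f'(x)|$, with $y'(x)=f(x)^{1-\xi}|f'(x)|$. By Proposition~\ref{prop:div}, $D_\beta[F||G]=D_\beta[f||g]$, so the divergence term is unchanged.

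To apply Theorem~\ref{th:bip-down} to $(F,G)$, we first need $F$ to be decreasing and derivable. Using $f'<0$, a direct computation gives
$$
\frac{dF}{dx}=-f^{\xi-1}\left(\xi-\frac{ff''}{(f')^2}\right).
$$
Dividing by $y'(x)$ then yields
$$
\left|\frac{dF}{dy}\right|=f^{2\xi-2}|f'|^{-1}\left|\xi-\frac{ff''}{(f')^2}\right|.
$$
Condition \eqref{cond:down_twice} is exactly what makes $dF/dy<0$, so $F$ is a decreasing derivable density. Its support and positivity are inherited from $f$, which gives \eqref{cond:support} for $(F,G)$.

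Theorem~\ref{th:bip-down} applied to $(F,G)$ then yields
$$
\phi_{(1-\alpha)b,2-\frac ab}^{2b-a}[F]\,e^{D_\beta[f||g]}\leqslant\phi^{\rm (cr)}_{2-a,b,1-\gamma}[F;G].
$$
The left-hand side will be identified with the down-Fisher measure of $f$. I will write $F_{p,\lambda}[F]=\int F^{1+p(\lambda-2)}|dF/dy|^{p}\,dy$, substitute the expressions above, and change variables via $dy=f^{1-\xi}|f'|\,dx$. This should produce an integrand of the form $f^{\cdot}|f'|^{\cdot}\,|\xi-ff''/(f')^2|^{(1-\alpha)b}$. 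I then match it with \eqref{eq:sub-fisher} for $p=(1-\alpha)b$, $q=(1-\alpha)(a-b)$ and $\lambda=\xi(2-a/b)$, which satisfies $p\lambda/(p-q)=\xi$ whenever $a\neq 2b$. This identification is presumably the content of the corresponding lemma in \cite{IP2025(c)}; otherwise I will verify it directly, including that the power $2b-a$ together with the $1/(p\lambda)$ in \eqref{eq:def_FI} collapses to the exponent $1/(1-\alpha)$.

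For the right-hand side, I expand $\int F\,G^{\gamma-1}F^{(1-a)(\gamma-1)}\,|dF/dy|^{b(1-\gamma)}\,dy$ and use $F\,dy=f\,dx$. This should give
$$
\int \left(\frac{|f'|^{a-b}}{f^{a\xi+2b(1-\xi)}}\,\frac{f}{g}\,\left|\xi-\frac{ff''}{(f')^2}\right|\right)^{1-\gamma} f\,dx,
$$
which is precisely $\varphi^{\rm (cr)}_{a,b,1-\gamma,\xi}[f;g]$, raised to $1/(1-\gamma)$.

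The equality case is inherited from \eqref{eq:ineq-bip-down}: equality holds if and only if $G\propto F^{A}|dF/dy|^{B}$ with $A$, $B$ as in Theorem~\ref{th:bip-down}. Translating this back to $x$ gives an explicit condition of the form $g\propto f^{\cdot}|f'|^{\cdot}\,|\xi-ff''/(f')^2|^{B}$.

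I expect the main obstacle to be the exponent bookkeeping. One point is making sure the powers of $f$ and $|f'|$ match the normalization in \eqref{eq:sub-fisher} exactly. The other is keeping track of signs, including the sign of $1-\alpha$ and whether the inequality direction is preserved under the power $1/(1-\alpha)$. If a mismatch appears, I would first try the alternative order of composition, namely $\mathfrak D_{a,b}$ followed by $\mathfrak D_{\xi,1}$, before adjusting the parameters.
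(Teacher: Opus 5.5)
Your route is the paper's proof step for step. You take $F=\mathfrak D_{\xi,1}[f]$ and $G=\overline{\mathfrak D}_{\xi,1}[g]$, use condition \eqref{cond:down_twice} to get $dF/dy<0$, apply Theorem~\ref{th:bip-down} to $(F,G)$, invoke Proposition~\ref{prop:div} for $D_\beta$, and pull both sides back to the variable $x$. The paper quotes the left-hand identity from \cite[Lemma 3.1]{IP2025(b)}, and your direct check of it goes through. In $x$, the integrand of $F_{(1-\alpha)b,2-a/b}[F]$ is $f^{1-(1-\alpha)a\xi+2(1-\alpha)b(\xi-1)}|f'|^{(1-\alpha)(a-b)}|\xi-ff''/(f')^2|^{(1-\alpha)b}$, which is \eqref{eq:sub-fisher} with your $p,q,\lambda$. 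Moreover, $(1-\alpha)b\,(2-\frac ab)=(1-\alpha)(2b-a)$ converts the power $2b-a$ of $\phi$ into the power $1/(1-\alpha)$.

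The right-hand side, however, does not come out as you assert, for two reasons.

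First, there is a sign slip. The integrand of $\phi^{\rm (cr)}_{2-a,b,1-\gamma}[F;G]$ is $F^{1+(1-a)(1-\gamma)}G^{\gamma-1}|dF/dy|^{b(1-\gamma)}$, i.e.\ $F\cdot F^{(a-1)(\gamma-1)}\cdots$, not $F\cdot F^{(1-a)(\gamma-1)}\cdots$. With your exponent the computation produces $|f'|^{2-a-b}$ and $f^{-(2-a)\xi-2b(1-\xi)}$, i.e.\ $a$ replaced by $2-a$. The cure is to correct the exponent, not to reverse the order of composition as in your fallback. With the correct exponent, $F\,dy=f\,dx$ and $G/F=g/f$ give the integrand $F^{a(\gamma-1)}(f/g)^{1-\gamma}|dF/dy|^{b(1-\gamma)}f$. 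The powers of $f$, $|f'|$ and $f/g$ then agree with \eqref{eq:cross-down-Fisher}.

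Second, the factor $|dF/dy|^{b(1-\gamma)}$ contributes $|\xi-ff''/(f')^2|^{b(1-\gamma)}$. By contrast, \eqref{eq:cross-down-Fisher} with $c=1-\gamma$ contains only $|\xi-ff''/(f')^2|^{1-\gamma}$. Hence the identification with $\varphi^{\rm (cr)}_{a,b,1-\gamma,\xi}[f;g]$ as printed is valid only for $b=1$. For general $b$, the factor $|\xi-ff''/(f')^2|$ inside the bracket of \eqref{eq:cross-down-Fisher} has to carry the exponent $b$. The paper's own computation takes the same unjustified last step, so this is a flaw in the definition rather than in your strategy. Still, you should state the corrected identity instead of claiming an exact match.

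Finally, your concern about the sign of $1-\alpha$ is unfounded. Every conversion is an exact identity, and no power is ever applied to the inequality. Its direction depends only on $\alpha>\beta$; that hypothesis enters through Theorem~\ref{theoremRRR} and is tacit in Theorem~\ref{th:bip-down} and in the present statement.
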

\begin{proof}
We start from the inequality \eqref{ineq:bip-down}, and we apply it to the transformed densities $\mathfrak{D}_{\xi}[f]$ and $\overline{\mathfrak{D}}_{\xi}[g]$, recalling that $\mathfrak{D}_{\xi}\equiv\mathfrak{D}_{\xi,1}$ in Definition \ref{def:bip_down}. The condition~\eqref{cond:down_twice} ensures that the down transformation $\down \xi$ is a decreasing function (see~\cite[Eq. 2.4]{IP2025}), as needed to apply the Theorem~\ref{th:bip-down}. On the one hand, we recall that
\begin{equation}\label{eq:interm6}
\phi_{(1-\alpha)b,2-\frac{a}{b}}^{2b-a}[\mathfrak{D}_{\xi}[f]]=\varphi_{(1-\alpha)b,(1-\alpha)(a-b),\xi\left(2-\frac{a}{b}\right)}^{\frac{1}{1-\alpha}}[f],
\end{equation}
as it follows from \cite[Lemma 3.1]{IP2025(b)}. On the other hand, we calculate the right-hand side of \eqref{ineq:bip-down} applied to the above mentioned transformed densities, taking into account the expression for the derivative of a down transformed density given in \cite[Remark 2.5]{IP2025} and the definition \eqref{eq:cross-down-Fisher} of the cross-down-Fisher measure:
\begin{equation*}
\begin{split}
\phi^{\rm (cr)}_{2-a,b,1-\gamma}&[\mathfrak{D}_{\xi}[f];\overline{\mathfrak{D}}_{\xi}[g]]=
\left[\int_\mathbb R [\mathfrak{D}_{\xi}[f](y)]^{1+(a-1)(\gamma-1)} [\overline{\mathfrak{D}}_{\xi}[g](y)]^{\gamma-1}\left|\frac{d}{dy}\mathfrak{D}_{\xi}[f](y)\right|^{b(1-\gamma)}\,dy\right]^{\frac{1}{1-\gamma}}\\
&=\left[\int_{\Rset}f(x)[\mathfrak{D}_{\xi}[f](y(x))]^{a(\gamma-1)}\left(\frac{\overline{\mathfrak{D}}_{\xi}[g]}{\mathfrak{D}_{\xi}[f]}\right)^{\gamma-1}(y(x))
\left|\frac{d}{dy}\mathfrak{D}_{\xi}[f](y(x))\right|^{b(1-\gamma)}\,dx\right]^{\frac{1}{1-\gamma}}\\
&=\left[\int_{\Rset}\left(\frac{|f'(x)|^{a-b}}{f(x)^{a\xi+2b(1-\xi)}}\frac{f(x)}{g(x)}
\left|\xi-\frac{f(x)f''(x)}{(f'(x))^2}\right|\right)^{1-\gamma}f(x)\,dx\right]^{\frac{1}{1-\gamma}}\\
&=\left[\varphi_{a,b,1-\gamma,\xi}^{\rm (cr)}[f;g]\right]^{\frac{1}{1-\gamma}}.
\end{split}
\end{equation*}
The inequality \eqref{ineq:down-Fisher} follows then by replacing the previous calculation, together with the identities \eqref{eq:interm5} and \eqref{eq:interm6}, in the inequality \eqref{ineq:bip-down}, completing the proof.
\end{proof}

\noindent \textbf{Remark. Equality in \eqref{ineq:down-Fisher}.} The equality condition in \eqref{ineq:down-Fisher} is inherited from the condition \eqref{eq:ineq-bip-down} applied to the transformed densities $\overline{\mathfrak{D}}_{a,b}[g]$ and $\mathfrak{D}_{a,b}[f]$, which gives
\begin{equation}\label{eq:interm13}
\overline{\mathfrak{D}}_{a,b}[g](s)\propto\mathfrak{D}_{a,b}[f](s)^{A}\left|\frac{d}{ds}\mathfrak{D}_{a,b}[f](s)\right|^B,
\end{equation}
with $A$ and $B$ defined in \eqref{eq:ineq-bip-down}. Taking into account the expression of the derivative of the biparametric down transformation (see \cite[Eq. (3.6)]{IP2025(c)}) which we recall below for the sake of completeness
$$
\frac{d}{ds}\mathfrak{D}_{a,b}[f](s)=bf(x)^{2a-2}|f'(x)|^{1-2b}\left(\frac{f(x)f''(x)}{(f'(x))^2}-\frac{a}{b}\right),
$$
we obtain after a substitution and easy algebraic manipulations the following condition for equality in \eqref{ineq:down-Fisher}:
$$
g(x)\propto f(x)^{1-a+Aa+2(a-1)B}|f'(x)|^{b-bA+(1-2b)B}\left|\frac{f(x)f''(x)}{(f'(x))^2}-\frac{a}{b}\right|^B.
$$

We close this section by noticing that the inequalities \eqref{ineq:bip-down} and \eqref{ineq:down-Fisher} can be seen as upper bounds for the R\'enyi divergence by quotients of a cross-Fisher and a Fisher information, respectively of a cross-down-Fisher and a down-Fisher measure, keeping the structure with levels of informational functionals and sharp inequalities introduced in \cite{IP2025(b)}.


\subsection{Up transformation}

Since in the previous section, by employing the biparametric down transformation, we established equalities bounding the R\'enyi divergence by functionals of Fisher information type, in this section we employ the up transformation, introduced in \cite{IP2025} (see also \cite{IP2025(c)}) in order to go one level above in the structure with levels of informational functionals and sharp inequalities introduced in \cite{IP2025(b)} and establish bounds of the R\'enyi divergence in terms of moment-like functionals. We first recall the definition of the up transformation (we restrict ourselves to the one-parameter one, avoiding the technical complications of the biparametric one):
\begin{definition}\label{def:up}
Let $\pdf: \supp\longrightarrow \mathbb R^+$ be a probability density function supported in the closure of $\Omega=(c,d)$. For $a\in\Rset\setminus\{2\}$, the up transformation is defined as
\begin{equation}\label{eq:up}
\mathfrak{U}_{a}[\pdf](y)=|(a-2)x(y)|^\frac{1}{2-a},\quad y'(x)=-|(a-2)x|^\frac{1}{a-2}f(x),
\end{equation}
while for $a=2$ the up transformation is defined as
\begin{equation}\label{eq:up2}
\mathfrak{U}_{2}[\pdf(x)](y)=e^{-x(y)},\quad y'(x)=-e^xf(x).
\end{equation}
\end{definition}
Let us mention here that $\mathfrak{U}_a$ is the inverse transformation to $\mathfrak{D}_a$, as proved in \cite[Proposition 2.3]{IP2025}. We introduce below the reciprocal transformation, according to the general framework in Section \ref{subsec:gf}.
\begin{definition}
Let $a\in\mathbb{R}$ and $f$, $g$ be two probability density functions satisfying \eqref{cond:support}. If $a\neq2$, we define
\begin{equation*}
	\overline{\mathfrak U}_{a}[g](y)=\frac{g(x)}{f(x)} |(2-a)x|^{\frac1{2-a}},\qquad y'(x)=-|(2-a)x|^{\frac1{a-2}}f(x),
\end{equation*}
while if $a=2$, the reciprocal transformation is given by
\begin{equation}\label{eq:rec-up2}
\overline{\mathfrak U}_{2}[g](y)=\frac{g(x)}{f(x)}e^{-x}, \quad y'(x)=-e^xf(x).
\end{equation}
\end{definition}
We next introduce, following the same pattern as in the previous sections, the cross-moment of $f$ relative to $g$.
\begin{definition}\label{def:cross-moment}
Let $p,\gamma\in\Rset$ and let $f$, $g$ be two probability density functions. The \emph{cross-deviation} of $f$ relative to $g$ is defined as
\begin{equation}\label{eq:cross-moment}
	\sigma_{p,\gamma}[f;g]:=\left(\int_\mathbb{R}[f(x)]^{2-\gamma}[g(x)]^{\gamma-1}\,|x|^p\,dx\right)^\frac1{p}.
\end{equation}
The \emph{exponential cross-deviation} of $f$ relative to $g$ is defined as
\begin{equation}\label{eq:cross-moment2}
\sigma^{(E)}_{\gamma}[f;g]:=\left(\int_{\Rset}[f(x)]^{2-\gamma}[g(x)]^{\gamma-1}e^{(1-\gamma)x}\,dx\right)^{\frac{1}{1-\gamma}}.
\end{equation}
\end{definition}
Let us note that, for $f=g$, $\sigma_{p,\gamma}[f;g]$ reduces to the standard $p$-deviation of the probability density function $f$. We are now in a position to establish a sharp inequality bounding the R\'enyi divergence by the cross-deviation and the deviation of a probability density function.
\begin{theorem}\label{th:up}
Let $\alpha$, $\beta$, $\gamma$ be three real numbers satisfying \eqref{eq:relation}, $f$ and $g$ be two probability density functions satisfying \eqref{cond:support} and $a\in\Rset$. If $a\neq2$, we have
\begin{equation}\label{ineq:up}
\left(\sigma_\frac{\alpha-1}{2-a}[f]\right)^\frac1{a-2}e^{D_\beta[f||g]}\leqslant\left(\sigma_{\frac{\gamma-1}{2-a},\gamma}[f;g]\right)^{\frac{1}{a-2}}.
	\end{equation}
If $a=2$, we have
\begin{equation}\label{ineq:up2}
\left\langle e^{(1-\alpha)x}\right\rangle_f^{\frac{1}{1-\alpha}}e^{D_{\beta}[f||g]}\leqslant\sigma_{\gamma}^{(E)}[f;g].
\end{equation}
The equality is achieved in the inequalities \eqref{ineq:up} and \eqref{ineq:up2} if and only if
\begin{equation}\label{eq:ineq-up}
g\propto\begin{cases}
          |x|^{\frac{1-\alpha}{(2-a)(\alpha-\beta)}}f(x), & \mbox{if } a\neq2, \\[1mm]
          e^{\frac{(\alpha-1)x}{\alpha-\beta}}f(x), & \mbox{if } a=2.
        \end{cases}
\end{equation}
\end{theorem}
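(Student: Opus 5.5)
We follow the same pattern as in the previous subsections and apply Theorem \ref{theoremRRR} to the pair $\mathfrak U_a[f]$, $\overline{\mathfrak U}_a[g]$. Taking exponentials, this gives
\begin{equation*}
N_\alpha[\mathfrak U_a[f]]\,e^{D_\beta[\mathfrak U_a[f]||\overline{\mathfrak U}_a[g]]}\leqslant \exp\left\{H_\gamma[\mathfrak U_a[f];\overline{\mathfrak U}_a[g]]\right\},
\end{equation*}
for $\alpha>\beta$, with the reversed inequality when $\alpha<\beta$. By Proposition \ref{prop:div}, since $\overline{\mathfrak U}_a$ is exactly the reciprocal transformation \eqref{eq:par_transf} associated with $\mathcal O=\mathfrak U_a$, the divergence term equals $D_\beta[f||g]$. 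It then remains to identify the two other terms.

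For the R\'enyi entropy term, the change of variables $dy=|y'(x)|\,dx$ gives
\begin{equation*}
\int[\mathfrak U_a[f]]^\alpha dy=\int |(a-2)x|^{\frac{\alpha}{2-a}}|(a-2)x|^{\frac1{a-2}}f(x)\,dx=|a-2|^{\frac{\alpha-1}{2-a}}\int |x|^{\frac{\alpha-1}{2-a}}f(x)\,dx.
\end{equation*}
Raising to the power $1/(1-\alpha)$ yields, up to the constant $|a-2|^{1/(a-2)}$, the quantity $(\sigma_{\frac{\alpha-1}{2-a}}[f])^{\frac{1}{a-2}}$. This is exactly the computation recalled in \cite{IP2025}, and we would quote it from there.

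For the cross-entropy term, we use $\mathfrak U_a[f]\,dy=f\,dx$ together with $\overline{\mathfrak U}_a[g]=\frac gf|(2-a)x|^{\frac1{2-a}}$. This gives
\begin{equation*}
\int \mathfrak U_a[f]\,(\overline{\mathfrak U}_a[g])^{\gamma-1}dy=|a-2|^{\frac{\gamma-1}{2-a}}\int f^{2-\gamma}g^{\gamma-1}|x|^{\frac{\gamma-1}{2-a}}dx.
\end{equation*}
Raising to the power $1/(1-\gamma)$ gives $|a-2|^{1/(a-2)}(\sigma_{\frac{\gamma-1}{2-a},\gamma}[f;g])^{\frac1{a-2}}$. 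The constant $|a-2|^{1/(a-2)}$ then cancels on both sides. The case $a=2$ is identical: $e^{-x}$ replaces the power of $x$, the entropy term becomes $\langle e^{(1-\alpha)x}\rangle_f^{1/(1-\alpha)}$, and the cross term becomes $\sigma^{(E)}_\gamma[f;g]$ by definition \eqref{eq:cross-moment2}.

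The equality case follows from \eqref{eq:main}, which requires $\overline{\mathfrak U}_a[g]\propto(\mathfrak U_a[f])^{\frac{1-\beta}{\alpha-\beta}}$, that is,
\begin{equation*}
\frac gf|x|^{\frac1{2-a}}\propto|x|^{\frac{1-\beta}{(2-a)(\alpha-\beta)}}.
\end{equation*}
Since $\frac{1-\beta}{\alpha-\beta}-1=\frac{1-\alpha}{\alpha-\beta}$, this gives \eqref{eq:ineq-up}. The $a=2$ case gives $g\propto e^{\frac{(\alpha-1)x}{\alpha-\beta}}f$ in the same way.

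The main obstacle is bookkeeping rather than ideas. The statement as written has no hypothesis $\alpha>\beta$, so the direction of the inequality must be tracked. One point to check is whether the exponent $1/(a-2)$ flips anything, which it does not, since we only exponentiate both sides. The other is that the constants $|a-2|$ indeed cancel, which they do because of the common outer exponent. We also need to verify, as in \cite{IP2025}, that $\mathfrak U_a$ is a genuine measure-preserving transformation on the relevant support, where the sign of $x$ is fixed, so that Proposition \ref{prop:div} applies. We would state the result for $\alpha>\beta$ and note that the inequality reverses for $\alpha<\beta$, as in the preceding theorems.
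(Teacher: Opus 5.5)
Your proposal is correct and follows essentially the same route as the paper. You apply Theorem~\ref{theoremRRR} to the pair $\mathfrak U_a[f]$, $\overline{\mathfrak U}_a[g]$, use Proposition~\ref{prop:div} for the divergence and the identity from \cite{IP2025} for $N_\alpha[\mathfrak U_a[f]]$, compute the cross-entropy term by direct change of variables (the factors $|a-2|^{1/(a-2)}$ cancel), and take the equality case from \eqref{eq:main}. Your remark that the inequality as stated tacitly requires $\alpha>\beta$, and reverses for $\alpha<\beta$, is accurate, since the paper's proof likewise relies on \eqref{eq:ineqRRR} in that regime.
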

\begin{proof}
Pick first $a\neq2$. It has been proved in \cite[Lemma 3.1]{IP2025} that
\begin{equation}\label{eq:interm7}
\exp\left(R_\alpha[\mathfrak U_{a}[f]]\right)=N_{\alpha}[\mathfrak U_{a}[f]]=\left(|2-a| \sigma_\frac{\alpha-1}{2-a}[f]\right)^\frac1{a-2}.
\end{equation}
Moreover, the equality
\begin{equation}\label{eq:interm8}
D_{\beta}[\mathfrak{U}_{a}[f]||\overline{\mathfrak{U}}_a[g]]=D_{\beta}[f||g]
\end{equation}
follows as a particular case of the general identity \eqref{eq:div}.
The inequality~\eqref{eq:ineqRRR} applied to $\mathfrak U_{a}[f]$ and $\overline{\mathfrak U_{a}}[g]$, together with \eqref{eq:interm7} and \eqref{eq:interm8}, give after taking exponentials
\begin{equation}\label{eq:interm9}
\left(|2-a|\sigma_\frac{\alpha-1}{2-a}[f]\right)^\frac1{a-2}e^{D_\beta[f||g]}\leqslant \exp\left(H_\gamma\left[\mathfrak U_{a}[f]\,||\,\overline{\mathfrak U_{a}}[g]\right]\right).
\end{equation}
We are only left to compute the right-hand side of the inequality \eqref{eq:interm9}. We have
\begin{eqnarray*}
\exp\left(H_\gamma\left[\mathfrak U_{a}[f]\,||\,\overline{\mathfrak U_{a}}[g]\right]\right)
&=&
\left(\int_\mathbb R
\mathfrak U_{a}[f](y)\left[\overline{\mathfrak U_{a}}[g]\right]^{\gamma-1}(y)dy
\right)^\frac1{1-\gamma}
\\
&=&
\left(\int_\mathbb R
f(x)\left[\frac{g(x)}{f(x)}|(2-a)x|^\frac{1}{2-a}\right]^{\gamma-1}dx
\right)^\frac1{1-\gamma}
\\
&=&
|2-a|^{\frac1{a-2}}\left(\int_\mathbb R
[f(x)]^{2-\gamma}[g(x)]^{\gamma-1}\, |x|^\frac{\gamma-1}{2-a}dx
\right)^\frac1{1-\gamma}\\
&=&|2-a|^{\frac1{a-2}}\sigma_{\frac{\gamma-1}{2-a},\gamma}^{\frac{1}{a-2}}[f;g].
\end{eqnarray*}
The inequality \eqref{ineq:up} follows by inserting the previous calculation into \eqref{eq:interm9}.

Let next $a=2$. Then, following the previous calculation but, in the final step of it, employing the definition \eqref{eq:rec-up2} for the reciprocal transformation, we find
\begin{equation*}
\begin{split}
\exp\left(H_\gamma\left[\mathfrak U_{2}[f]\,||\,\overline{\mathfrak U_{2}}[g]\right]\right)
&=\left(\int_\mathbb{R}f(x)\left[\frac{g(x)}{f(x)}e^{-x}\right]^{\gamma-1}dx\right)^{\frac1{1-\gamma}}\\
&=\left(\int_{\Rset}f(x)^{2-\gamma}g(x)^{\gamma-1}e^{(1-\gamma)x}\right)^{\frac{1}{1-\gamma}}=\sigma_{\gamma}^{(E)}[f;g].
\end{split}
\end{equation*}
Moreover, it has been shown in \cite[Eq. (3.6)]{IP2025} that
$$
\exp\left(R_\alpha[\mathfrak U_{2}[f]]\right)=N_{\alpha}[\mathfrak U_{2}[f]]=\left(\int_{\Rset}f(x)e^{(1-\alpha)x}\,dx\right)^{\frac{1}{1-\alpha}}=\left\langle e^{(1-\alpha)x}\right\rangle_f^{\frac{1}{1-\alpha}}.
$$
The inequality \eqref{ineq:up2} follows readily from the previous two calculations and \eqref{eq:interm8}. The equality is attained in \eqref{ineq:up} and \eqref{ineq:up2} if and only if $\overline{\mathfrak{U}}_{a}[g]$ and $\mathfrak{U}_a[f]$ satisfy the proportionality condition \eqref{eq:main}. Thus, for $a\neq2$ we find (dropping the constants according to the meaning of the notation $\propto$)
$$
\frac{g(x)}{f(x)}|x|^{\frac{1}{2-a}}\propto|x|^{\frac{1-\beta}{(2-a)(\alpha-\beta)}},
$$
leading to the first case in \eqref{eq:ineq-up}, while for $a=2$ we have
$$
\frac{g(x)}{f(x)}e^{-x}\propto e^{-\frac{x(1-\beta)}{\alpha-\beta}},
$$
leading to the second case in \eqref{eq:ineq-up} and completing the proof.
\end{proof}

\noindent\textbf{Remark.}  When $a\neq 2$, in the previous inequality the equality is not reached by a pair of exponential densities, but it does by a pair of power-law, or Pareto, densities. Furthermore, if we pick a Gaussian or a generalized normal density as $f$, then the equality is achieved when $g$ is a Weibull or a generalized Gamma density. In contrast to this case, when $a=2$, a pair of exponentials is a minimizing pair, but a pair of power law densities is no longer an equality pair.

\medskip

\noindent
The previous process can be iterated further, in order to derive inequalities for upper-moments of any order, following the process of iteration of applications of the up transformation presented in \cite[Section 3]{IP2025(b)}. Let us stress here that, while the down transformation cannot be in general iterated as many times as we wish (it requires more and more restrictive conditions of regularity on $f$ at every iteration step), the up transformation can be iterated $n$ times, for any natural number $n$. In the present work, we only perform one iteration, and leave the reader to iterate more and obtain further inequalities, along the same pattern, if needed. We also restrict ourselves to $a\neq2$ and $b\neq2$ in this step, in order to keep the presentation brief. Let us recall the definition and notation of the \emph{upper-moments} and \emph{upper-deviations} introduced in \cite[Definition 3.1]{IP2025(b)} for $a\in\Rset\setminus\{2\}$ and $p\in\Rset$:
\begin{equation}\label{eq:upper-mom}
M_{p,a}[\pdf]=\int_{\Rset}\left|\int_x^{d} |(a-2)t|^\frac1{a-2}f(t)dt \right|^p\pdf(x)dx, \quad m_{p,a}[f]=M_{p,a}[f]^\frac{a-2}p.
\end{equation}
We also introduce the \emph{cross-upper-moment} of two probability density functions $f$ and $g$ by the following expression:
\begin{equation}\label{def:cross-upper}
M_{p,\lambda,b}[f;g]:=\int_{\Rset}f(x)^{1-\lambda}g(x)^{\lambda}\left|\int_{x}^{d}|(b-2)t|^{\frac{1}{b-2}}f(t)\,dt\right|^p\,dx, \quad
m_{p,\lambda,b}[f;g]:=M_{p,\lambda,b}^{\frac{b-2}{p}},
\end{equation}
observing that $M_{p,\lambda,b}[f;f]=M_{p,b}[f]$. With this notation, we prove the following inequality.
\begin{theorem}\label{th:upper-mom}
Let $\alpha$, $\beta$, $\gamma$ be three real numbers satisfying \eqref{eq:relation} and $f$, $g$ be two probability density functions satisfying \eqref{cond:support}. Let $a$, $b\in\Rset\setminus\{2\}$. We then have:
\begin{equation}\label{ineq:upper}
m_{\frac{\alpha-1}{2-a},b}^{\frac{1}{(a-2)(b-2)}}[f]e^{D_{\beta}[f||g]}\leq m_{\frac{\gamma-1}{2-a},\gamma-1,b}^{\frac{1}{(a-2)(b-2)}}[f;g].
\end{equation}
The inequality \eqref{ineq:upper} is sharp and the equality is attained if and only if
\begin{equation}\label{eq:ineq-upper}
g(x)\propto f(x)\left|\int_x^{d}|t|^{\frac{1}{a-2}}f(t)\,dt\right|^{\frac{1-\alpha}{(2-a)(\alpha-\beta)}}.
\end{equation}
\end{theorem}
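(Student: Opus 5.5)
The plan is to iterate the up transformation once: apply Theorem \ref{th:up} (the case $a\neq2$, inequality \eqref{ineq:up}) to the pair $\mathfrak{U}_b[f]$ and $\overline{\mathfrak{U}}_b[g]$ in place of $(f,g)$. Here $\overline{\mathfrak{U}}_b$ is the reciprocal transformation built from $f$ and $\mathfrak{U}_b$ as in \eqref{eq:par_transf}. Both are measure-preserving transformations with the same change of variables $y'(x)=-|(b-2)x|^{\frac1{b-2}}f(x)$. Proposition \ref{prop:div} then gives at once
$$
D_\beta[\mathfrak{U}_b[f]\,||\,\overline{\mathfrak{U}}_b[g]]=D_\beta[f||g],
$$
so the divergence term carries over unchanged. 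The remaining work is to identify the two deviation-type functionals of the transformed pair.

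\textbf{Left-hand side.} I need $\sigma_{\frac{\alpha-1}{2-a}}[\mathfrak{U}_b[f]]$, i.e.\ an absolute moment of order $p=\frac{\alpha-1}{2-a}$ of $\mathfrak{U}_b[f]$ in the variable $y$. Measure preservation gives
$$
\int |y|^p\,\mathfrak{U}_b[f](y)\,dy=\int |y(x)|^p f(x)\,dx .
$$
Here $y(x)=\int_x^d |(b-2)t|^{\frac1{b-2}}f(t)\,dt$, taking the integration constant so that $y(d)=0$, as in the convention of \cite{IP2025(b)}. Hence this moment equals $M_{p,b}[f]$ by \eqref{eq:upper-mom}. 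Raising to the appropriate powers turns $\left(\sigma_{p}[\mathfrak{U}_b[f]]\right)^{\frac1{a-2}}$ into $m_{p,b}^{\frac{1}{(a-2)(b-2)}}[f]$. This is what \cite[Section 3]{IP2025(b)} records, and I would quote it rather than recompute it.

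\textbf{Right-hand side.} I compute $\sigma_{\frac{\gamma-1}{2-a},\gamma}[\mathfrak{U}_b[f];\overline{\mathfrak{U}}_b[g]]$ from \eqref{eq:cross-moment}. Since $\overline{\mathfrak{U}}_b[g]=\frac{g}{f}\,\mathfrak{U}_b[f]$, the integrand satisfies
$$
\mathfrak{U}_b[f]^{2-\gamma}\,\overline{\mathfrak{U}}_b[g]^{\gamma-1}=\mathfrak{U}_b[f]\,(g/f)^{\gamma-1}.
$$
After changing variables with $\mathfrak{U}_b[f]\,dy=f\,dx$, the integral becomes
$$
\int f^{2-\gamma}g^{\gamma-1}\,|y(x)|^{\frac{\gamma-1}{2-a}}\,dx=M_{\frac{\gamma-1}{2-a},\gamma-1,b}[f;g]
$$
by \eqref{def:cross-upper}. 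Taking the powers $\frac{2-a}{\gamma-1}$ and then $\frac1{a-2}$ gives $m^{\frac{1}{(a-2)(b-2)}}_{\frac{\gamma-1}{2-a},\gamma-1,b}[f;g]$. Constant factors $|b-2|^{\dots}$ appear on both sides with identical powers, so they should cancel. Substituting both identities into \eqref{ineq:up} yields \eqref{ineq:upper}, with the inequality direction inherited from Theorem \ref{th:up}.

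\textbf{Equality.} The equality case comes from \eqref{eq:ineq-up} applied to the transformed pair:
$$
\overline{\mathfrak{U}}_b[g](y)\propto |y|^{\frac{1-\alpha}{(2-a)(\alpha-\beta)}}\,\mathfrak{U}_b[f](y).
$$
Dividing by $\mathfrak{U}_b[f]$ gives $g/f\propto |y(x)|^{\frac{1-\alpha}{(2-a)(\alpha-\beta)}}$, i.e.\ \eqref{eq:ineq-upper}. Note that the inner integral there must carry the weight of the $b$-transformation, up to the constant factor.

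\textbf{Main obstacle.} The delicate step is the bookkeeping of exponents and constants. I must check that the normalisation $m=M^{(b-2)/p}$, combined with the outer power $\frac1{a-2}$ from Theorem \ref{th:up}, gives exactly the exponent $\frac{1}{(a-2)(b-2)}$ on both sides. This includes tracking the factor $|2-a|$ that Theorem \ref{th:up} has already absorbed and the factors $|b-2|^{\frac{1}{b-2}}$ hidden in $y(x)$, which should cancel. I would also check the sign and orientation of $y$: $y'<0$ reverses the orientation, which is why $\int_x^d$ appears. Finally, the hypothesis $b\neq2$ is needed, because otherwise the exponential form of $\mathfrak{U}_2$ changes the moment structure.
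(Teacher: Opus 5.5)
Your proposal is correct and follows the paper's proof. You apply \eqref{ineq:up} to the pair $\mathfrak{U}_b[f]$, $\overline{\mathfrak{U}}_b[g]$ and use Proposition~\ref{prop:div} for the divergence. You identify the two sides as $M_{\frac{\alpha-1}{2-a},b}[f]^{\frac{1}{1-\alpha}}$ and $M_{\frac{\gamma-1}{2-a},\gamma-1,b}[f;g]^{\frac{1}{1-\gamma}}$; no stray $|b-2|$ constants need cancelling, since they are built into these definitions. You then pull back the equality case.

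You are also right that the inner weight in \eqref{eq:ineq-upper} should be $|t|^{\frac{1}{b-2}}$, not $|t|^{\frac{1}{a-2}}$. This is what the paper's own argument gives once its $\mathfrak{U}_a$ in the equality step is read as $\mathfrak{U}_b$. As in the paper, the stated direction tacitly requires $\alpha>\beta$ (reversed for $\alpha<\beta$), inherited through Theorem~\ref{th:up}.
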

\begin{proof}
Pick $b\in\Rset\setminus\{2\}$. We apply \eqref{ineq:up} to the transformed densities $\mathfrak{U}_{b}[f]$, respectively $\overline{\mathfrak{U}}_b[g]$ (in place of $f$ and $g$) to obtain
\begin{equation}\label{eq:interm12}
\sigma_{\frac{\alpha-1}{2-a}}^{\frac{1}{a-2}}[\mathfrak{U}_b[f]]e^{D_{\beta}[\mathfrak{U}_b[f]||\overline{\mathfrak{U}}_b[g]]}\leq
\left(\sigma_{\frac{\gamma-1}{2-a},\gamma}[\mathfrak{U}_b[f];\overline{\mathfrak{U}}_b[g]]\right)^{\frac{1}{a-2}}.
\end{equation}
On the one hand, we have
\begin{equation*}
\sigma_{\frac{\alpha-1}{2-a}}^{\frac{1}{a-2}}[\mathfrak{U}_b[f]]=\mu_{\frac{\alpha-1}{2-a}}^{\frac{1}{1-\alpha}}[\mathfrak{U}_b[f]]
=M_{\frac{\alpha-1}{2-a},b}^{\frac{1}{1-\alpha}}[f]=m_{\frac{\alpha-1}{2-a},b}^{\frac{1}{(a-2)(b-2)}}[f].
\end{equation*}
On the other hand, we can compute the right-hand side of \eqref{eq:interm12} as follows from Eq~\eqref{eq:cross-moment}:
\begin{equation*}
\begin{split}
\left(\sigma_{\frac{\gamma-1}{2-a},\gamma}[\mathfrak{U}_b[f];\overline{\mathfrak{U}}_b[g]]\right)^{\frac1{a-2}}
&=\left[\int_{\Rset}\mathfrak{U}_b[f](y)\left(\frac{\overline{\mathfrak{U}}_b[g]}{\mathfrak{U}_b[f]}\right)^{\gamma-1}(y)
|y|^{\frac{\gamma-1}{2-a}}\,dy\right]^{\frac{1}{1-\gamma}}\\
&=\left[\int_{\Rset}f(x)\left(\frac{g(x)}{f(x)}\right)^{\gamma-1}
\left|\int_{x}^{d}|(2-b)t|^{\frac{1}{b-2}}f(t)\,dt\right|^{\frac{\gamma-1}{2-a}}\,dx\right]^{\frac{1}{1-\gamma}}\\
&=m_{\frac{\gamma-1}{2-a},\gamma-1,b}^{\frac{1}{(a-2)(b-2)}}[f;g].
\end{split}
\end{equation*}
The proof of the inequality \eqref{ineq:upper} is completed by the already standard fact that
$$
D_{\beta}[\mathfrak{U}_b[f]||\overline{\mathfrak{U}}_b[g]]=D_{\beta}[f||g],
$$
following from \eqref{eq:div}. Finally, the equality is achieved in \eqref{ineq:upper} if and only if $\overline{\mathfrak{U}}_a[g]$ and $\mathfrak{U}_a[f]$ satisfy the first case of the equality condition \eqref{eq:ineq-up}, that is,
$$
\overline{\mathfrak{U}}_a[g](y)\propto\mathfrak{U}_a[f](y)|y|^{\frac{1-\alpha}{(2-a)(\alpha-\beta)}},
$$
where $y$ is the new independent variable introduced in the up transformation. The latter relation implies
$$
\frac{g(x)}{f(x)}\propto|y(x)|^{\frac{1-\alpha}{(2-a)(\alpha-\beta)}},
$$
which immediately leads to \eqref{eq:ineq-upper} by replacing $y(x)$ by its integral formula stemming from \eqref{eq:up}, ending the proof.
\end{proof}
It is rather obvious how one can iterate further the definition of cross-upper-moments of higher order following the same pattern as in \eqref{def:cross-upper} and how the inequality \eqref{ineq:upper} writes for higher order upper-moments. We leave this easy extension to the reader.


\subsection*{Acknowledgements}

R. G. I. is partially supported by the project PID2024-160967NB-I00 (AEI) funded by the Ministry of Science, Innovation and Universities of Spain and FEDER/EU. D. P.-C. is partially supported by the project PID2023-153035NB-100 (AEI) funded by the Ministry of Science, Innovation and Universities of Spain and “ERDF/EU A way of making Europe”.	

\bigskip

\noindent \textbf{Data availability} Our manuscript has no associated data.

\bigskip

\noindent \textbf{Competing interest} The authors declare that there is no competing interest.

		\bibliographystyle{unsrt}
		\bibliography{refs}

\begin{thebibliography}{10}

\bibitem{Abe2003}
S.~Abe.
\newblock Geometry of escort distributions.
\newblock {\em Physical Review E}, 68(3):031101, 2003.

\bibitem{Bercher2012(divergences)}
J.-F. Bercher.
\newblock A simple probabilistic construction yielding generalized entropies
  and divergences, escort distributions and q-gaussians.
\newblock {\em Physica A: Statistical Mechanics and its Applications},
  391(19):4460--4469, 2012.

\bibitem{Tsallis2009(book)}
C.~Tsallis.
\newblock {\em Introduction to {N}onextensive {S}tatistical {M}echanics:
  {A}pproaching a {C}omplex {W}orld}, volume~1.
\newblock Springer, 2009.

\bibitem{Fisher1925}
R.~A. Fisher.
\newblock Theory of statistical estimation.
\newblock In {\em Mathematical {P}roceedings of the Cambridge {P}hilosophical
  {S}ociety}, volume~22, pages 700--725. Cambridge University Press, 1925.

\bibitem{Hammad1978}
P.~Hammad.
\newblock Mesure d’ordre $\alpha$ de l’information au sens de {F}isher.
\newblock {\em Revue de statistique appliqu{\'e}e}, 26(1):73--84, 1978.

\bibitem{Antolin2009}
J.~Antol{\'\i}n, J.~C. Angulo, and S.~L{\'o}pez-Rosa.
\newblock Fisher and {J}ensen--{S}hannon divergences: Quantitative comparisons
  among distributions. application to position and momentum atomic densities.
\newblock {\em The Journal of {C}hemical Physics}, 130(7), 2009.

\bibitem{Martin2013}
A.~L. Mart{\'\i}n, J.~C. Angulo, and J.~Antol{\'\i}n.
\newblock Fisher-like atomic divergences: Mathematical grounds and physical
  applications.
\newblock {\em Physica A: Statistical Mechanics and its Applications},
  392(21):5552--5563, 2013.

\bibitem{Antolin2014}
J.~Antol{\'\i}n, J.~C. Angulo, S.~Mulas, and S.~L{\'o}pez-Rosa.
\newblock Relativistic global and local divergences in hydrogenic systems: A
  study in position and momentum spaces.
\newblock {\em Physical Review A}, 90(4):042511, 2014.

\bibitem{Yamano2021}
T.~Yamano.
\newblock Skewed {J}ensen--{F}isher divergence and its bounds.
\newblock {\em Foundations}, 1(2):256--264, 2021.

\bibitem{Lutwak2005}
E.~Lutwak, D.~Yang, and G.~Zhang.
\newblock Cram\'er--{R}ao and moment-entropy inequalities for \text{R}\'enyi
  entropy and generalized {F}isher information.
\newblock {\em IEEE Transactions on Information Theory}, 51(2):473--478, 2005.

\bibitem{Bercher2012}
J.-F. Bercher.
\newblock On generalized {C}ram{\'e}r--{R}ao inequalities, generalized {F}isher
  information and characterizations of generalized {$q-$}{G}aussian
  distributions.
\newblock {\em Journal of Physics A: Mathematical and Theoretical},
  45(25):255303, 2012.

\bibitem{Toranzo2018}
E.~V. Toranzo, S.~Zozor, and J.-M. Brossier.
\newblock Generalization of the de {B}ruijn identity to general
  $\phi$-entropies and $\phi$-{F}isher informations.
\newblock {\em IEEE Transactions on Information Theory}, 64(10):6743--6758,
  2018.

\bibitem{Tempesta2011}
P.~Tempesta.
\newblock Group entropies, correlation laws, and zeta functions.
\newblock {\em Physical Review E—Statistical, Nonlinear, and Soft Matter
  Physics}, 84(2):021121, 2011.

\bibitem{Tempesta2019}
M.~{\'A}. Rodr{\'\i}, {\'A}.~Romaniega, and P.~Tempesta.
\newblock A new class of entropic information measures, formal group theory and
  information geometry.
\newblock {\em Proceedings of the Royal Society A}, 475(2222):20180633, 2019.

\bibitem{IPT2025}
R.~G. Iagar, D.~Puertas-Centeno, and E.~V. Toranzo.
\newblock Sharp informational inequalities involving {K}ullback-{L}eibler and
  {R}ényi divergences and a family of scaling-invariant relative {F}isher
  measures.
\newblock {\em arXiv preprint arXiv:2507.17408}, 2025.

\bibitem{Lin2002}
J.~Lin.
\newblock Divergence measures based on the {S}hannon entropy.
\newblock {\em IEEE Transactions on Information Theory}, 37(1):145--151, 2002.

\bibitem{Sanchez-Moreno2012}
P.~S{\'a}nchez-Moreno, A.~Zarzo, and J.~S. Dehesa.
\newblock Jensen divergence based on {F}isher's information.
\newblock {\em Journal of Physics A: Mathematical and Theoretical}, 45, 2012.

\bibitem{Stummer-Vajda2012}
W.~Stummer and I.~Vajda.
\newblock On {B}regman distances and divergences of probability measures.
\newblock {\em IEEE Transactions on Information Theory}, 58(3):1277--1288,
  2012.

\bibitem{Zozor2017}
S.~Zozor, D.~Puertas-Centeno, and J.~S. Dehesa.
\newblock On generalized {S}tam inequalities and {F}isher--\text{R}{\'e}nyi
  complexity measures.
\newblock {\em Entropy}, 19(9):493, 2017.

\bibitem{Puertas2025}
D.~Puertas-Centeno and S.~Zozor.
\newblock Some informational inequalities involving generalized trigonometric
  functions and a new class of generalized moments.
\newblock {\em Journal of Physics A: Mathematical and Theoretical},
  58(16):165002, 2025.

\bibitem{IP2025}
R.~G. Iagar and D.~Puertas-Centeno.
\newblock A new pair of transformations and applications to generalized
  informational inequalities and {H}ausdorff moment problem.
\newblock {\em Communications in Nonlinear Science and Numerical Simulation},
  151:109091, 2025.

\bibitem{IP2025(b)}
R.~G. Iagar and D.~Puertas-Centeno.
\newblock Through and beyond moments, entropies and {F}isher information
  measures: new informational functionals and inequalities.
\newblock {\em Physica D: Nonlinear Phenomena}, 483:134928, 2025.

\bibitem{IP2025(c)}
R.~G. Iagar and D.~Puertas-Centeno.
\newblock Generalized informational functionals and new monotone measures of
  statistical complexity.
\newblock {\em arXiv preprint arXiv:2511.02502}, 2025.

\bibitem{GPPT25(b)}
P.~R. Gordoa, A.~Pickering, D.~Puertas-Centeno, and E.~V. Toranzo.
\newblock Sundman-like transformations and the {NRT} nonlinear
  {S}chr{\"o}dinger equation.
\newblock {\em arXiv preprint arXiv:2511.11765}, 2025.

\bibitem{Beck2004}
C.~Beck.
\newblock Superstatistics, escort distributions, and applications.
\newblock {\em Physica A: Statistical Mechanics and its Applications},
  342(1-2):139--144, 2004.

\bibitem{Bercher2011}
J.-F. Bercher.
\newblock Escort entropies and divergences and related canonical distribution.
\newblock {\em Physics Letters A}, 375(33):2969--2973, 2011.

\bibitem{Puertas2019}
D.~Puertas-Centeno.
\newblock Differential-escort transformations and the monotonicity of the
  {LMC}-\text{R}{\'e}nyi complexity measure.
\newblock {\em Physica A: Statistical Mechanics and its Applications},
  518:177--189, 2019.

\bibitem{Bercher2012a}
J.-F. Bercher.
\newblock On a {$(\beta,q)-$}generalized {F}isher information and inequalities
  involving {$q-$}{G}aussian distributions.
\newblock {\em Journal of Mathematical Physics}, 53(6), 2012.

\end{thebibliography}

\end{document}